\documentclass[a4paper]{article}
\usepackage[margin=3cm]{geometry}
\usepackage{amsthm}
\usepackage{enumerate}
\usepackage{mathtools}
\usepackage{latexsym}
\usepackage{amsmath,amssymb,amsfonts}
\usepackage{color}
\usepackage{bbold}
\usepackage{xspace}
\usepackage{lettrine}
\usepackage{fancybox}

\newtheorem{theorem}{Theorem}
\newtheorem{lemma}{Lemma}
\newtheorem{corollary}{Corollary}
\newtheorem{remark}{Remark}

\newcounter{alg}

\newcommand{\eps}{\ensuremath{\varepsilon}}

\usepackage{xspace}
\usepackage{tcolorbox}

\newcommand{\rpay}[2]{\ensuremath{\mathcal{R}(#1,#2)}\xspace}
\newcommand{\cpay}[2]{\ensuremath{\mathcal{C}(#1,#2)}\xspace}
\newcommand{\rreg}{\ensuremath{\text{reg}_r}\xspace}
\newcommand{\creg}{\ensuremath{\text{reg}_c}\xspace}
\newcommand{\rbr}[1]{\ensuremath{\text{B}_r}(#1)\xspace}
\newcommand{\cbr}[1]{\ensuremath{\text{B}_c}(#1)\xspace}

\newcommand{\simplex}{\ensuremath{\Delta^n}\xspace}
\newcommand{\primal}{\text{Primal}\xspace}
\newcommand{\dual}{\text{{Dual}}\xspace}
\newcommand{\xbf}{\ensuremath{\mathbf{x}}\xspace}
\newcommand{\ybf}{\ensuremath{\mathbf{y}}\xspace}
\newcommand{\xbfs}{\ensuremath{\mathbf{x}_s}\xspace}
\newcommand{\ybfs}{\ensuremath{\mathbf{y}_s}\xspace}
\newcommand{\zbf}{\ensuremath{\mathbf{z}}\xspace}
\newcommand{\wbf}{\ensuremath{\mathbf{w}}\xspace}

\newcommand{\tvar}{\ensuremath{t_r}\xspace}
\newcommand{\vvar}{\ensuremath{v_r}\xspace}
\newcommand{\tvarc}{\ensuremath{t_c}\xspace}
\newcommand{\vvarc}{\ensuremath{v_c}\xspace}

\newcommand{\TS}{\ensuremath{\mathtt{TS}}\xspace}
\newcount\Comments  
\Comments=1
\definecolor{darkgreen}{rgb}{0,0.6,0}
\newcommand{\kibitz}[2]{\ifnum\Comments=1{\color{#1}{#2}}\fi}

\title{A Polynomial-Time Algorithm for 1/3-Approximate Nash Equilibria in Bimatrix Games}
\author{Argyrios Deligkas$^1$, Michail Fasoulakis$^{2,3}$, Evangelos Markakis$^3$\\
Royal Holloway, University of London$^1$\\
Foundation for Research and Technology-Hellas (FORTH)$^2$\\
Athens University of Economics and Business$^3$}
\date{}
\begin{document}

\maketitle
\begin{abstract}
Since the celebrated PPAD-completeness result for Nash equilibria in bimatrix games, a long line of research has focused on polynomial-time algorithms that compute $\varepsilon$-approximate Nash equilibria. 
Finding the best possible approximation guarantee that we can have in polynomial time has been a fundamental and non-trivial pursuit on settling the complexity of approximate equilibria. Despite a significant amount of effort, the algorithm of Tsaknakis and Spirakis \cite{TS08}, with an approximation guarantee of $(0.3393+\delta)$, remains the state of the art over the last 15 years. 
In this paper, we propose a new refinement of the Tsaknakis-Spirakis algorithm, resulting in a polynomial-time algorithm that computes a $(\frac{1}{3}+\delta)$-Nash equilibrium, for any constant $\delta>0$. The main idea of our approach is to go beyond the use of convex combinations of primal and dual strategies, as defined in the optimization framework of \cite{TS08}, and enrich the pool of strategies from which we build the strategy profiles that we output in certain bottleneck cases of the algorithm. 
\end{abstract}

\newpage
\section{Introduction}
The notion of {\em Nash equilibrium} has been undoubtedly a fundamental solution concept in strategic games, ever since the seminal result of Nash~\cite{Nas51}, on the existence of equilibria for all finite games. Nash's theorem however is only {\em existential}; it only shows that such an equilibrium always exists, but it does not provide an efficient algorithm to find one. In fact, many years after the work of Nash, in a series of breakthrough results, it was proven that computing a Nash equilibrium is PPAD-complete~\cite{DGP06}, even for {\em bimatrix} games~\cite{CDT09}, which provides strong evidence that computing an equilibrium is an intractable problem.

These negative results have naturally led to the study of {\em approximate} Nash equilibria. In an \eps-approximate Nash equilibrium (\eps-NE), no player can increase her payoff more than \eps, by unilaterally changing her strategy. 
In contrast to exact Nash equilibria, the relaxation to $\epsilon$-NE does admit subexponential algorithms. More precisely, the quasi polynomial-time approximation scheme (QPTAS) of~\cite{LMM03} can find an \eps-NE in time $n^{O(\log{n}/\epsilon^2)}$, for a game with $n$ available pure strategies per player. One can then wonder whether the QPTAS could be improved to a PTAS or even a FPTAS.
Unfortunately this does not seem to be the case, as the result of Chen, Deng, and Teng~\cite{CDT09} already ruled out the existence of an FPTAS, unless PPAD=P. Some years later, in another breakthrough result, Rubinstein~\cite{Rub16-PETH} showed that, assuming the exponential-time hypothesis for PPAD, there exists a very small, yet unspecified, constant $\eps^{\star}$ such that finding an \eps-NE requires quasi polynomial time for every constant $\eps < \eps^*$. This would rule out a PTAS too.

Although it seems unlikely to have a polynomial time algorithm for any $\epsilon>0$, it is still important to identify the best constant $\epsilon$ for which we can have an efficient algorithm. In fact, this has been one of the fundamental questions of algorithmic game theory, that is still unresolved. Soon after the initial PPAD-hardness results of~\cite{CDT09,DGP06}, there was a flourish of works along this direction. Kontogiannis, Panagopoulou, and Spirakis~\cite{KPS06} derived a polynomial-time algorithm for $\eps = 3/4$; Daskalakis, Mehta, and Papadimitriou~\cite{DMP07,DMP09} improved it to $\eps=1/2$ and $\eps \approx 0.382$; Bosse, Byrka, and Markakis~\cite{BBM10} achieved $\eps=0.364$; and finally Tsaknakis and Spirakis~\cite{TS08} attained a bound of $\eps = 0.3393+\delta$, for any constant $\delta>0$. 
Ever since this last work however, the progress on this front has stalled, and the result of Tsaknakis and Spirakis (referred to as the \TS algorithm from now on) remains the state of the art over the last 15 years. It is particularly puzzling that so far, it has remained an open problem to even improve the approximation to $1/3+\delta$ (even though it has been conjectured that such an approximation should be feasible). To make things worse, in the very recent work of \cite{ChenDHLL21-Deng-TS-tight}, it was shown that the \TS algorithm and its analysis are tight.

In order to beat the 0.3393-guarantee of the \TS algorithm, it is instructive to understand first its bottleneck cases. At a high level, we can think of the algorithm as consisting of two phases: the \emph{Descent phase} and the \emph{Strategy-construction phase}. In the Descent phase, it performs ``gradient descent'' on the maximum regret among the two players, i.e., the maximum additional gain that a player can have by a unilateral deviation to another strategy. This process terminates at an approximate ``stationary'' point, i.e., a strategy profile such that any local change does not decrease the value of the maximum regret. 
When we reach a $\delta$-stationary point for some small constant $\delta$, the Strategy-construction phase begins. This phase performs a case analysis, based on certain relevant parameters of the game, and tries to decide which strategy profile to output in each of the five cases that arise.

In doing so, the algorithm has at its disposal the $\delta$-stationary profile, along with a ``dual'' strategy profile (produced by solving the dual of the linear program used in the Descent phase). 
A close inspection reveals that one of these two profiles suffices to guarantee a $(\frac{1}{3}+\delta)$-NE in three out of the five cases. In the remaining two cases, the algorithm outputs a convex combination of the stationary and the dual strategies, and this is where the bottleneck occurs, causing the algorithm to output a $(0.3393+\delta)$-NE.

\paragraph*{Our contribution.}
We improve upon the state of the art and provide a polynomial-time algorithm for computing a $(\frac{1}{3}+\delta)$-NE in bimatrix games, for any constant $\delta>0$. More specifically, we modify sufficiently the \TS algorithm by designing an {\em improved} Strategy-construction phase to handle the problematic cases of \TS. Our main insights in doing so are as follows:
\begin{itemize}
    \item Apart from convex combinations between primal (stationary) and dual strategies, we also consider best response strategies to such convex combinations. Hence, we enrich the pool of strategies, out of which we choose the profile to output in each case. As a result, in the cases where the $\delta$-stationary point or the dual profile (or their combinations) do not have the desired guarantee, we have one of the players use a carefully chosen convex combination between our newly defined strategies and her dual strategy.
    \item We produce a more refined case analysis, that is based on  the values of some new auxiliary parameters (e.g., the quantities $\vvar, \tvar$ and $\hat{\mu}$, defined in Section \ref{sec:algo}). These parameters encode payoff differences or regrets of the players for using specific strategies, and they help us in two ways. First, they are used to obtain improved upper bounds on the maximum regret of the $\delta$-stationary profile (Section \ref{sec:regret_bounds}). Secondly, their values greatly help us in decomposing our analysis into convenient subcases in order to establish the approximation guarantee.
\end{itemize}

\paragraph*{Further related work.}
A different notion of approximation of NE is that of \eps-{\em well-supported} NE (\eps-WSNE). In an \eps-WSNE every player is required to place positive probability only to actions that are within $\eps$ of being best responses. Hence, \eps-WSNE are more constrained than \eps-NE, where the players can place a positive probability on any strategy. After a series of papers on the topic~\cite{KS-wsne,FGSS16-approximate-WSNE}, the currently best approximation is for $\eps = 0.6528$ due to~\cite{czumaj2019distributed-wsne}. 

Another line of research has focused on more structured classes of bimatrix games such as: constant-rank games, where the matrix defined by the sum of the two payoff matrices has constant rank~\cite{adsul2011rank1,kannan2010-rank-games,Mehta18-constant-rank};
win-lose games, where the payoff for every pure action is either 0 or 1~\cite{chen2007-win-lose,codenotti2005-win-lose,liu2018-win-lose}; sparse games, where there are only ``a few'' outcomes that yield a non-zero payoff for each player~\cite{chen2006-sparse}, imitation games, where the payoff matrix for one of the players is the identity matrix~\cite{mclennan2010-imitation,mclennan2010simple-imitation,MurhekarMehta20-imitation}; random games, where the payoff entries are drawn from certain distributions~\cite{barany2007nash-random,panagopoulou2014random}; symmetric games, where the payoff matrix of one player is the transpose of the other~\cite{CFJ14,KS11}. 
In most of these classes, it has been possible to obtain improved approximation guarantees and have a better understanding of how to construct approximate equilibria.

Concerning quasi-polynomial algorithms, in addition to the QPTAS of~\cite{LMM03}, three new QPTASs have been obtained, which contain the original result of~\cite{LMM03} as a special case:~\cite{Barman18-caratheodory-qptas} gave a refined, parameterized, approximation scheme;~\cite{BabichenkoBP17} gave a QPTAS that can be applied to multi-player games as well;~\cite{DeligkasFMS22-eps-ETR} gave a more general approach for approximation schemes for the existential theory of the reals.
More recently, more negative results for \eps-NE were derived:~\cite{KothariM18-sum-of-squares} gave an unconditional lower bound, based on the sum of squares hierarchy;~\cite{BoodaghiansBHR20-Smoothed-2Nash} proved PPAD-hardness in the smoothed analysis setting;~\cite{BravermanKW15-QP-LB,DeligkasFS18-constrained-QP-LB,austrin2011inapproximability} gave quasi-polynomial time lower bounds for constrained \eps-NE, under the exponential time hypothesis.

\section{Preliminaries}
\label{sec:prelims}
In what follows, let $[n] := \{1, 2, \ldots, n\}$ and let $\Delta^n$ denote the $(n-1)$-dimensional simplex.  
We focus on $n \times n$ bimatrix games, where $n$ denotes the number of available pure strategies per player. Such games are defined by a pair $(R,C) \in [0,1]^{n \times n}$ of two matrices: $R$ and $C$ are the payoff matrices for the {\em row} player and the {\em column} player respectively. 
We follow the usual assumption in the relevant literature that the matrices are normalized, so that all entries are in $[0, 1]$.
It is also assumed without loss of generality, that both players have the same number of pure strategies, since otherwise one can add dummy strategies to equalize the rows and columns.
The semantics of the payoff matrices are that when the row player picks a row $i \in [n]$ and the column player picks a column $j \in [n]$, then they receive a payoff of $R_{ij}$ and $C_{ij}$ respectively. 

A {\em mixed strategy} is a probability distribution over $[n]$. We use $\xbf \in \simplex$ to denote a mixed strategy for the row player and $x_i$ to denote the probability the player assigns to the pure strategy $i$. For the column player, we use $\ybf \in \simplex$ and $y_i$, respectively. 
If \xbf and \ybf are mixed strategies for the row and the column player respectively, then we call $(\xbf,\ybf)$ a {\em (mixed) strategy profile}. It is often also convenient to represent pure strategies as vectors. Hence, we will use the vector $e_i$, which has 1 at index $i$ and zero elsewhere, to denote the $i$-th pure strategy, in other words the distribution where a player assigns probability one to play the pure strategy $i$.

Given a strategy profile $(\xbf,\ybf)$, the expected payoff of the row player is $\rpay{\xbf}{\ybf}:= \xbf^TR\ybf$, and the expected payoff of the column player is $\cpay{\xbf}{\ybf}:=\xbf^TC\ybf$. Hence, for a pure strategy $e_i$, the term $\rpay{e_i}{\ybf}:=\sum_j  R_{ij}y_j$, denotes the expected payoff of the row player, when she plays the pure strategy $i$ against strategy \ybf of the column player. Similarly, $\cpay{\xbf}{e_j}$ is the expected payoff of the column player when she plays the pure strategy $j$ against \xbf. We say that a pure strategy is a {\em best-response strategy} for a player if it maximizes her expected payoff against a chosen strategy of her opponent. So, under a strategy profile $(\xbf, \ybf)$, the set of pure best responses  for the row player is 
$\rbr{\ybf} := \{i \in [n]: \rpay{e_i}{\ybf} = \max_{i' \in [n]}\rpay{e_{i'}}{\ybf}\}$,
and for the column player, it is 
$\cbr{\xbf} := \{j \in [n]: \cpay{\xbf}{e_j} = \max_{j' \in [n]}\cpay{\xbf}{e_{j'}}\}$.

The {\em regret} of the row player at a profile $(\xbf, \ybf)$, is $\rreg(\xbf,\ybf) = \max_i\rpay{e_i}{\ybf} - \rpay{\xbf}{\ybf}$ and the regret of the column player is $\creg(\xbf,\ybf) = \max_j\cpay{\xbf}{e_j} - \cpay{\xbf}{\ybf}$. The strategy profile $(\xbf,\ybf)$ is an {\em \eps-Nash equilibrium}, or \eps-NE, if the regret of both players is bounded by $\eps \in [0,1]$, formally $\max\{\rreg(\xbf,\ybf), \creg(\xbf,\ybf)\} \leq \eps$. If $\eps = 0$, then the strategy profile $(\xbf,\ybf)$ is an exact Nash equilibrium.

\section{The Tsaknakis-Spirakis algorithm}
\label{sec:TS-algo}
In this section we give a description of the algorithm by \cite{TS08} and we highlight the bottleneck cases, where it fails to provide a $(\frac{1}{3}+\delta)$-approximation. In order to have a self-contained exposition, we also present some of the lemmas that are used in the analysis of \cite{TS08}, which are needed for our work as well.

The core of the algorithm is to consider the function $g(\xbf,\ybf) = \max\{\rreg{(\xbf,\ybf),\creg(\xbf,\ybf)}\}$, i.e., the maximum regret among the two players. 
Clearly, if we arrive at a profile $(\xbf,\ybf)$ such that $g(\xbf,\ybf) \leq \eps$, then $(\xbf,\ybf)$ is an \eps-Nash equilibrium.
At a high level, one can think of \TS as consisting of two phases: the {\emph Descent} phase, and the {\em Strategy-construction} phase.

\medskip
\noindent
{\bf Descent Phase.} During this phase, \TS performs ``gradient descent'' on the function $g(\xbf,\ybf)$, until it reaches a {\it ``stationary'' point}, i.e., a strategy profile such that any local change does not decrease the value of $g$. More concretely, every iteration of the Descent phase performs a series of steps: given the current profile under consideration, it equalizes the regrets of the players, then it solves an appropriate linear program to identify a feasible direction, and finally depending on the solution of the LP, it either updates the strategy profile, or it decides that it has reached an approximate stationary point. 

The first step runs the RegretEqualization procedure described below. This procedure is based on solving a single linear program to equalize the regrets of the two players, and most importantly, it guarantees that the maximum regret does not increase.

\begin{tcolorbox}[title={RegretEqualization$(\xbf_0,\ybf_0)$}]
{\bf Input:} Strategy profile $(\xbf_0,\ybf_0)$.\\
\smallskip
{\bf Output:} A strategy profile $(\xbf, \ybf)$ such that $\rreg(\xbf,\ybf) = \creg(\xbf,\ybf)$.\\
{\bf 1.} If $\rreg(\xbf_0,\ybf_0) \geq \creg(\xbf_0,\ybf_0)$, keep $\ybf_0$ fixed and solve the following linear program\\
\hspace*{10mm} Minimize $\rreg(\xbf,\ybf_0)$\\
\hspace*{10mm} Such that $\rreg(\xbf,\ybf_0) \geq \creg(\xbf,\ybf_0)$ and $\xbf \in \Delta^n$,\\
\smallskip
and return $(\xbf,\ybf_0)$, where $\xbf$ is the solution of the linear program.\\
{\bf 2.} If $\rreg(\xbf_0,\ybf_0) < \creg(\xbf_0,\ybf_0)$, keep $\xbf_0$ fixed and solve the following linear program\\
\hspace*{10mm} Minimize $\creg(\xbf_0,\ybf)$\\
\hspace*{10mm} Such that $\creg(\xbf_0,\ybf) \geq \rreg(\xbf_0,\ybf)$ and $\ybf \in \Delta^n$,\\
and return $(\xbf_0,\ybf)$, where $\ybf$ is the solution of the linear program.\\
\end{tcolorbox}
\noindent
Given the 
output $(\xbf, \ybf)$ of
RegretEqualization, the next step is to either find a feasible direction to follow so as to decrease the maximum regret, or to decide that $(\xbf, \ybf)$ is an approximate stationary point. This is enforced by solving the following linear program.
\begin{tcolorbox}[title={Primal Linear Program: $\primal(\xbf,\ybf)$}]
\begin{tabular}{ r l l }
 minimize & $\gamma$  &  \\ 
 s.t. & $\gamma \geq \rpay{e_i}{\ybf'} - \rpay{\xbf}{\ybf'} - \rpay{\xbf'}{\ybf} +\rpay{\xbf}{\ybf}$, & $\forall i \in \rbr{\ybf}$, \\  
  & $\gamma \geq \cpay{\xbf'}{e_j} - \cpay{\xbf'}{\ybf} - \cpay{\xbf}{\ybf'}+ \cpay{\xbf}{\ybf}$, & $\forall j \in \cbr{\xbf}$,\\
  & $\xbf' \in \simplex, \quad \ybf' \in \simplex$.
\end{tabular}
\end{tcolorbox}
\noindent It is proved in \cite{TS08} that the solution of $\primal(\xbf,\ybf)$ guarantees one of the following:
\begin{enumerate}
    \item \label{point:1} it either identifies a strategy profile $(\xbf',\ybf')$ such that the maximum regret can be strictly decreased by a constant fraction, if we move from $(\xbf, \ybf)$ towards $(\xbf', \ybf')$;
    \item \label{point:2} or it decides that $(\xbf, \ybf)$ is a {\em $\delta$-stationary} point\footnote{This means that the directional derivative of $g(\xbf, \ybf)$ is at least $-\delta$. For the definition of directional derivative, see \cite{TS08}.}, which is the termination criterion of the descent.
\end{enumerate}

\noindent
Putting everything together, the Descent phase of the \TS algorithm is described below, starting from some arbitrary initial strategy profile, and its main properties are captured by the following lemma.
\begin{lemma}[\cite{TS08}]
For any constant $\delta>0$, the Descent phase computes a $\delta$-stationary point, in time polynomial in $1/\delta$ and in the size of the game.
\end{lemma}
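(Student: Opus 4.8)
The plan is to run a potential-function argument with $g$ itself as the potential, relying on two facts already established above: that RegretEqualization never raises the maximum regret, and that a single call to $\primal(\xbf,\ybf)$ either certifies a $\delta$-stationary point (item~\ref{point:2}) or produces a direction along which the maximum regret strictly decreases (item~\ref{point:1}). Since every entry of $R$ and $C$ lies in $[0,1]$, we have $g(\xbf,\ybf)\in[0,1]$ for every profile, and in particular $g\le 1$ at the arbitrary starting profile. Fix the constant $\delta>0$.

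First I would check that $g$ is non-increasing along the whole run. The RegretEqualization LP is always feasible: in its first case the incumbent $\xbf_0$ already satisfies $\rreg(\xbf_0,\ybf_0)\ge\creg(\xbf_0,\ybf_0)$, and symmetrically in its second case; so the equalization step is well defined and, by its stated guarantee, does not increase the maximum regret. The subsequent move towards the Primal optimizer is performed only when it strictly decreases $g$. Hence it suffices to lower bound, by a quantity depending only on $\delta$, the decrease of $g$ in any iteration that does not terminate.

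The crux is the quantitative form of item~\ref{point:1}. Let $(\xbf,\ybf)$ be the equalized profile at the start of such an iteration and let $\gamma^\star$, $(\xbf',\ybf')$ be the optimal value and an optimal solution of $\primal(\xbf,\ybf)$; the termination test is $\gamma^\star>-\delta$, in which case $(\xbf,\ybf)$ is output as a $\delta$-stationary point in the sense made precise in the footnote above. If instead $\gamma^\star\le-\delta$, I would look at the segment $(\xbf_t,\ybf_t)=(1-t)(\xbf,\ybf)+t(\xbf',\ybf')$ and expand, for each $i\in\rbr{\ybf}$, the quantity $\rpay{e_i}{\ybf_t}-\rpay{\xbf_t}{\ybf_t}$ (and symmetrically $\cpay{\xbf_t}{e_j}-\cpay{\xbf_t}{\ybf_t}$ for $j\in\cbr{\xbf}$) as a quadratic polynomial in $t$. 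Since the regrets are equalized, the constant term of each such quadratic is exactly $g(\xbf,\ybf)$; by the Primal constraints its linear coefficient is at most $\gamma^\star$; and its quadratic coefficient is an inner product of vectors with entries in $[-1,1]$, hence bounded in absolute value by an absolute constant $c$. Minimizing the resulting upper envelope $g(\xbf,\ybf)+t\gamma^\star+ct^2$ over $t$ yields a step size $t^\star=\Theta(\delta)$ at which the maximum regret of $(\xbf_{t^\star},\ybf_{t^\star})$ is at most $g(\xbf,\ybf)-\Omega(\delta^2)$, which is the quantitative content of the first branch of the dichotomy. I expect the genuinely delicate point to be the bookkeeping for the pure strategies that are \emph{not} best responses at $(\xbf,\ybf)$: one must ensure that none of them overtakes this bound within the chosen range of $t$, and this is exactly where the analysis of \cite{TS08} does its real work (there it is handled via a line search along the segment together with a careful estimate of the length of the relevant activity interval), so I would import it rather than redo it.

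Putting the pieces together: every non-terminating iteration strictly decreases $g$ by at least $\rho:=\Omega(\delta^2)>0$, a quantity depending only on $\delta$. As $g$ stays in $[0,1]$ and never increases, at most $1/\rho=O(1/\delta^2)$ such iterations can occur, so the Descent phase reaches a $\delta$-stationary point within $O(1/\delta^2)$ iterations. Each iteration performs a constant number of arithmetic operations plus the solution of the RegretEqualization LP and of $\primal(\xbf,\ybf)$; both are linear programs with $O(n)$ variables and $O(n)$ constraints whose coefficients are computed from $R,C$ in polynomial time, so each iteration runs in time polynomial in $n$. Multiplying the two bounds yields a running time polynomial in $1/\delta$ and in the size of the game, as claimed.
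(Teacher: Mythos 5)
The paper does not actually prove this lemma --- it is stated as an imported result of \cite{TS08} --- so the only comparison to be made is whether your sketch would stand on its own, and as written it does not. The potential-function frame (RegretEqualization never increases $g$, $g\in[0,1]$, each LP is polynomial-size) is fine, but the crucial quantitative step is exactly the one you flag and then assert anyway: the claim that every non-terminating iteration decreases $g$ by $\Omega(\delta^2)$. Your quadratic-envelope bound $g(\xbf,\ybf)+t\gamma^\star+ct^2$ is only valid for the pure strategies $i\in\rbr{\ybf}$ and $j\in\cbr{\xbf}$, because those are the only constraints appearing in $\primal(\xbf,\ybf)$. A pure strategy $i\notin\rbr{\ybf}$ whose payoff against \ybf is smaller than the maximum by an arbitrarily small amount $\eta$ (not controlled by $\delta$) has a linear coefficient along the segment that is bounded only by an absolute constant, not by $\gamma^\star$; after a step of length $\Theta(\delta)\gg\eta$ it can become the new maximizer and push $g$ \emph{above} its old value by $\Theta(\delta)$. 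So neither monotonicity of $g$ across Step~4 nor the per-iteration $\Omega(\delta^2)$ decrease follows from the argument you give, and with them falls the $O(1/\delta^2)$ iteration count. (Relatedly, your statement that the move ``is performed only when it strictly decreases $g$'' is not what the pseudocode does --- the move is performed whenever $\gamma-g(\xbf,\ybf)<-\delta$, and the decrease is precisely what has to be proven.)

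Your own escape hatch is to ``import'' the handling of the non-best-response strategies from \cite{TS08}, but that import is the entire content of the lemma: everything else in your sketch (feasibility of the equalization LP, boundedness of $g$, polynomial cost per iteration) is routine bookkeeping. In effect you have reduced the cited convergence result to itself, which is no more than what the paper already does by citing it; as a self-contained proof the proposal has a genuine gap at its central step. To close it you would need the actual mechanism of \cite{TS08} (a step-size/line-search analysis that accounts for the strategies outside the current best-response sets, yielding the ``decrease by a constant fraction'' guarantee alluded to after the Primal LP), not the naive envelope over the active constraints.
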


\begin{tcolorbox}[title={Descent Phase}]
{\bf Input:} Strategy profile $(\xbf,\ybf)$, a small constant $\delta > 0$ ($\delta<<1/3$).\\
\smallskip
{\bf Output:} A $\delta$-stationary profile $(\xbfs, \ybfs)$ with equal regrets.\\
\smallskip
{\bf 1.} Equalize the regrets of the two players and set $(\xbf,\ybf)\leftarrow$ RegretEqualization$(\xbf,\ybf)$.\\
\smallskip
{\bf 2.} Solve $\primal(\xbf,\ybf)$ and compute $\xbf', \ybf'$, and $\gamma$.\\
\smallskip
{\bf 3.} If $\gamma - g(\xbf,\ybf) \geq -\delta$, set $\xbfs \leftarrow \xbf, \ybfs \leftarrow \ybf$ and stop.\\
\smallskip
{\bf 4.} Else, set 
$\xbf \leftarrow (1-\frac{\delta}{\delta+2})\cdot \xbf + \frac{\delta}{\delta+2}\cdot \xbf'$, $\ybf \leftarrow (1-\frac{\delta}{\delta+2})\cdot \ybf + \frac{\delta}{\delta+2}\cdot \ybf'$ and go to Step 1.
\end{tcolorbox}

\medskip
\noindent
{\bf Strategy-construction Phase.} In this phase, the algorithm utilizes the dual linear program of $\primal(\xbf, \ybf)$, in order to identify some alternative candidate strategies for the players. 
\begin{tcolorbox}[title={Dual Linear Program: $\dual(\xbf,\ybf)$}]
\begin{tabular}{ r l l }
  \text{maximize } & 
    $P \cdot \rpay{\xbf}{\ybf}+ Q \cdot \cpay{\xbf}{\ybf} + a + b$ \\
  $\text{s.t.\ }$ 
  & $p_i \geq 0, \quad i \in \rbr{y}$,\\ 
  & $q_j \geq 0, \quad j \in \cbr{x}$,\\
  & $P = \sum_{i \in \rbr{y}} p_i, \quad 
    Q = \sum_{j \in \cbr{x}} q_j$, \\
  & $P + Q = 1$, \\
  & $a \leq \sum_{i \in \rbr{y}} -\rpay{e_k}{\ybf} \cdot p_i +
    \sum_{j \in \cbr{x}} [-\cpay{e_k}{\ybf}+C_{kj}] \cdot q_j, \quad 1 \leq k \leq n$\\
  & $b \leq \sum_{j \in \cbr{x}} -\cpay{\xbf}{e_l}\cdot q_j +
    \sum_{i \in \rbr{y}} [- \rpay{\xbf}{e_l}+R_{il}] \cdot p_i, \quad 1 \leq l \leq n.$
\end{tabular}
\end{tcolorbox}
\noindent
Given the $\delta$-stationary profile $(\xbfs, \ybfs)$ from the Descent phase, the algorithm solves $\dual(\xbfs, \ybfs)$ and computes the following (from the optimal dual variables).
\begin{itemize}
    \item The dual strategy \wbf for the row player, where 
    $w_i = p_i/\sum_{j \in \rbr{\ybfs}}p_j$, for $i\in B_r(\ybfs)$, and $0$ elsewhere; note that by construction, \wbf is a best-response strategy against \ybfs. 
    \item The dual strategy \zbf for the column player, where 
    $z_i = q_i/\sum_{j \in \cbr{\xbfs}}q_j$, for $i\in B_r(\xbfs)$, and $0$ elsewhere; 
    by construction, \zbf is a best-response strategy against \xbfs.
    \item The parameters $P, Q \in [0,1]$, that are useful for the approximation analysis.
\end{itemize}
In addition, we define the following two quantities $\lambda$ and $\mu$, that help in parameterizing the maximum regret bound. These quantities are equal to the payoff difference of a player between the dual and the primal strategies, when the other player uses her dual strategy: 
\begin{align}
\label{eq:lambda-mu}
\lambda = \rpay{\wbf}{\zbf} - \rpay{\xbfs}{\zbf}, \qquad \mu = \cpay{\wbf}{\zbf}-\cpay{\wbf}{\ybfs}.
\end{align}
{\bf Fact.} 
Obviously, $\lambda \leq 1$, and $\mu\leq 1$ and furthermore, $\rpay{\wbf}{\zbf} \geq \lambda$, and $\cpay{\wbf}{\zbf}\geq \mu$. 

\noindent
The algorithm then constructs and outputs a strategy profile as follows.
\begin{tcolorbox}[title={Strategy-Construction Phase}]
{\bf Input:} A $\delta$-stationary strategy profile $(\xbfs,\ybfs)$ from the Descent phase, the dual strategies \wbf, \zbf, and the parameters $\lambda, \mu$.\\
\smallskip
{\bf 1.} If $\min\{\lambda, \mu\} \leq \frac{1}{2}$, then return $(\xbfs,\ybfs)$.\\
\smallskip
{\bf 2.} If $\min\{\lambda, \mu\} \geq \frac{2}{3}$, then return $(\wbf,\zbf)$.\\
\smallskip
{\bf 3.} If $\min\{\lambda, \mu\} > \frac{1}{2}$ and $\max\{\lambda, \mu\} \leq \frac{2}{3}$, then return $(\xbfs,\ybfs)$.\\
{\bf 4.}~Else if $\lambda \geq \mu$, then return the strategy profile with the minimum regret between  $\left(\frac{1}{1+\lambda-\mu}\cdot \wbf +  \frac{\lambda-\mu}{1+\lambda-\mu}\cdot \xbfs, \zbf \right)$ and $(\xbfs,\ybfs)$.\\
{\bf 5.}~Else if $\lambda < \mu$, then return the strategy profile with the minimum regret between $\left(\wbf, \frac{1}{1+\mu -\lambda}\cdot \zbf +  \frac{\mu-\lambda}{1+\mu-\lambda}\cdot \ybfs \right)$ and $(\xbfs,\ybfs)$.
\end{tcolorbox}

\begin{theorem}[\cite{TS08}]
For any constant $\delta>0$, the \TS algorithm computes in polynomial time a $(0.3393+\delta)$-NE.
\end{theorem}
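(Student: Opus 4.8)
The plan is to establish the two assertions of the statement separately: polynomial running time, and that the returned profile is a $(0.3393+\delta)$-NE. The running time is immediate from the preceding lemma together with the fact that the Strategy-construction phase performs only polynomial work. The Descent phase reaches a $\delta$-stationary profile $(\xbfs,\ybfs)$ with equal regrets in time polynomial in $1/\delta$ and in the size of the game (each iteration only solves the linear programs of RegretEqualization and $\primal(\cdot,\cdot)$). The Strategy-construction phase then solves the single linear program $\dual(\xbfs,\ybfs)$, reads off $\wbf,\zbf,P,Q$ and computes $\lambda,\mu$ with a constant number of matrix-vector operations, decides which of the five cases applies, and in Cases 4 and 5 compares the regrets of two explicitly described profiles. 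Hence the whole procedure is polynomial.

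For the approximation guarantee, the goal is to show that in each of the five cases the returned profile has regret at most $0.3393+\delta$, and I would first prepare two reusable estimates. The first bounds the dual profile: since $\wbf$ is a best response to $\ybfs$ and $\zbf$ a best response to $\xbfs$, the Fact gives $\rpay{\wbf}{\zbf}\ge\lambda$ and $\cpay{\wbf}{\zbf}\ge\mu$, and since all payoffs lie in $[0,1]$ this yields $\rreg(\wbf,\zbf)\le 1-\lambda$, $\creg(\wbf,\zbf)\le 1-\mu$, so $g(\wbf,\zbf)\le 1-\min\{\lambda,\mu\}$. The second bounds the stationary profile: the optimum $\gamma$ of $\primal(\xbfs,\ybfs)$ is a minimum, so substituting any feasible direction bounds it from above, while $\delta$-stationarity gives $\gamma\ge g(\xbfs,\ybfs)-\delta$; substituting a suitably parametrized direction built from $\wbf,\zbf,\xbfs,\ybfs$, optimizing over its parameters, and using the equal-regret identities $\rpay{\xbfs}{\ybfs}=\rpay{\wbf}{\ybfs}-g(\xbfs,\ybfs)$ and $\cpay{\xbfs}{\ybfs}=\cpay{\xbfs}{\zbf}-g(\xbfs,\ybfs)$ (equivalently, evaluating the objective of $\dual(\xbfs,\ybfs)$ at its optimal solution with well-chosen indices in the $a$- and $b$-constraints), one obtains the estimate of \cite{TS08}: $g(\xbfs,\ybfs)\le h(\lambda,\mu)+\delta$ for an explicit $h$ that stays below $0.3393$ outside the ``bottleneck'' region $\{\min\{\lambda,\mu\}>1/2,\ \max\{\lambda,\mu\}>2/3\}$.

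With these estimates the cases are dispatched as follows. In Case 2 ($\min\{\lambda,\mu\}\ge 2/3$) the first estimate gives $g(\wbf,\zbf)\le 1-\min\{\lambda,\mu\}\le 1/3$. In Cases 1 and 3 the pair $(\lambda,\mu)$ lies outside the bottleneck region, so the second estimate gives $g(\xbfs,\ybfs)\le h(\lambda,\mu)+\delta\le 0.3393+\delta$. In Cases 4 and 5 we are inside the bottleneck region; consider Case 4 ($\lambda\ge\mu$) and put $t=\tfrac{1}{1+\lambda-\mu}$ (so $1-t=\tfrac{\lambda-\mu}{1+\lambda-\mu}\in[0,1]$). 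Using $\rpay{\wbf}{\zbf}=\rpay{\xbfs}{\zbf}+\lambda$, $\cpay{\wbf}{\zbf}\ge\mu$, the best-response property of $\zbf$ (so $\max_j\cpay{\xbfs}{e_j}=\cpay{\xbfs}{\zbf}$), and subadditivity of $\max$, a direct computation gives that the row regret of $\big(t\wbf+(1-t)\xbfs,\ \zbf\big)$ is at most $1-t\lambda$ and its column regret at most $t(1-\mu)$; the value of $t$ above is precisely the one making these two bounds coincide, with common value $\tfrac{1-\mu}{1+\lambda-\mu}$. Hence the profile returned in Case 4 has regret at most $\min\big\{h(\lambda,\mu)+\delta,\ \tfrac{1-\mu}{1+\lambda-\mu}\big\}$, and Case 5 is symmetric. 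A short optimization over the bottleneck region shows that this minimum never exceeds $0.3393+\delta$, with $0.3393$ being precisely its extremal value, which closes all five cases.

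The crux — and the step I expect to be the main obstacle — is the stationary-point estimate $g(\xbfs,\ybfs)\le h(\lambda,\mu)+\delta$: one must find the right parametrized family of feasible directions for $\primal(\xbfs,\ybfs)$ (equivalently, exploit the precise structure of the optimal dual solution that defines $\wbf,\zbf,P,Q$), carry out the ensuing optimization, and keep careful track of the linear-in-$\delta$ slack coming from the $\delta$-stationarity gap. Pinning down $h$ exactly is delicate, and — as shown in \cite{ChenDHLL21-Deng-TS-tight} — there is no slack to be recovered: this $h$, and therefore the final constant $0.3393$, is tight for the present framework. The subsequent five-way split and the optimization over the bottleneck region are then a finite, if careful, calculation.
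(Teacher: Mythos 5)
Your proposal follows essentially the same route as the paper's treatment (the paper itself states this theorem with reference to \cite{TS08} and only reproduces the ingredients it needs): polynomial running time from the Descent-phase lemma plus one dual LP, the dual-profile bound $g(\wbf,\zbf)\le 1-\min\{\lambda,\mu\}$, a stationary-point bound in terms of $\lambda,\mu$, and a convex-combination bound in the two bottleneck cases. Your Case-4 computation is correct and is exactly the TS construction: row regret at most $1-t\lambda$, column regret at most $t(1-\mu)$, equalized at $t=\frac{1}{1+\lambda-\mu}$ with common value $\frac{1-\mu}{1+\lambda-\mu}$. The one substantive piece you leave as a black box, the explicit $h$, is precisely what the paper supplies and it is less delicate than you anticipate: Lemma~\ref{lem:stationary-bound} (LP duality at the $\delta$-stationary point) with the substitutions $(\xbf',\ybf')=(\xbfs,\zbf)$ and $(\xbf',\ybf')=(\wbf,\ybfs)$ gives $g(\xbfs,\ybfs)\le\min\{P\cdot\lambda,(1-P)\cdot\mu\}+\delta\le\frac{\lambda\mu}{\lambda+\mu}+\delta$ (Lemma~\ref{lem:TS-lambda-mu-bound}), so $h(\lambda,\mu)=\frac{\lambda\mu}{\lambda+\mu}$; with this, your Cases 1--3 reduce to Lemma~\ref{lem:TS-good-cases} (where the bound is in fact $\frac13+\delta$, comfortably below $0.3393$). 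What remains of your argument is then the concrete claim that
\[
\max_{\frac12<\mu\le\frac23<\lambda\le 1}\ \min\Big\{\tfrac{\lambda\mu}{\lambda+\mu},\ \tfrac{1-\mu}{1+\lambda-\mu}\Big\}
\]
equals the TS constant $\approx 0.3393$ (attained around $\mu\approx 0.58$, $\lambda\approx 0.82$); you assert rather than perform this optimization, but it is a routine two-variable calculation and is exactly where the constant $0.3393$ originates, so the skeleton you give does yield the theorem once $h$ is instantiated.
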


\begin{remark}
One could also check all the proposed profiles of this phase at every iteration of the Descent phase, as presented in \cite{TS08}, and stop if we have reached already the desired approximation. But this does not affect the worst-case running time, which occurs when the Descent phase terminates at a $\delta$-stationary point. 
\end{remark}
\noindent
We present below some important lemmas from \cite{TS08} that are needed in our analysis too. For the sake of completeness, we provide their proofs here. 

The first, and most important, lemma below shows how $\primal(\xbfs, \ybfs)$ and $\dual(\xbfs, \ybfs)$ can be used to bound the value of the maximum regret, $g(\xbfs, \ybfs)$.

\begin{lemma}[implied by \cite{TS08}]
\label{lem:stationary-bound}
Let $(\xbfs,\ybfs)$ be a $\delta$-stationary point produced by the Descent Phase, for a constant $\delta>0$. Let also $\wbf, \zbf$ and $P$, be derived by an optimal solution to $\dual(\xbfs, \ybfs)$, as seen before. Then, for any strategy profile $(\xbf',\ybf')$, it holds that 
 \begin{align*}
 g(\xbfs,\ybfs) & \leq 
 P\cdot(\rpay{\wbf}{\ybf'} - \rpay{\xbf'}{\ybfs} -\rpay{\xbfs}{\ybf'} +\rpay{\xbfs}{\ybfs}) + \\
  & (1-P)\cdot (\cpay{\xbf'}{\zbf} - \cpay{\xbf'}{\ybfs}-\cpay{\xbfs}{\ybf'} +\cpay{\xbfs}{\ybfs})  +\delta.
\end{align*}
\end{lemma}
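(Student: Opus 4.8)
The plan is to combine the termination test of the Descent phase with strong LP duality. The first observation is that, although $\primal(\xbfs,\ybfs)$ contains products of strategy vectors, it is a genuine linear program in the variables $(\gamma,\xbf',\ybf')$: in each of its constraints every term involves at least one of the \emph{fixed} vectors $\xbfs$ or $\ybfs$, so all constraints are affine in $\gamma$ and $(\xbf',\ybf')$. Since the Descent phase stopped at $(\xbfs,\ybfs)$, its Step~3 guarantees $\gamma^\star \geq g(\xbfs,\ybfs)-\delta$, where $\gamma^\star$ is the optimal value of $\primal(\xbfs,\ybfs)$; equivalently $g(\xbfs,\ybfs)\leq \gamma^\star+\delta$. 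By strong duality, $\gamma^\star$ equals the optimal value of $\dual(\xbfs,\ybfs)$, and at the optimal dual solution $(p,q,a^\star,b^\star)$ this value equals $P\cdot\rpay{\xbfs}{\ybfs}+Q\cdot\cpay{\xbfs}{\ybfs}+a^\star+b^\star$, where $P=\sum_{i\in\rbr{\ybfs}}p_i$, $Q=\sum_{j\in\cbr{\xbfs}}q_j$, and $P+Q=1$.

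Next I would bound $a^\star$ and $b^\star$ separately, using dual feasibility together with the particular profile $(\xbf',\ybf')$ of the statement as a set of averaging weights. The dual constraint on $a$ holds for every $k\in[n]$; multiplying the $k$-th such inequality by $x'_k\geq 0$, summing over $k$ (using $\sum_k x'_k=1$), and then using $\sum_k x'_k\rpay{e_k}{\ybfs}=\rpay{\xbf'}{\ybfs}$, $\sum_k x'_k\cpay{e_k}{\ybfs}=\cpay{\xbf'}{\ybfs}$, and $\sum_k x'_k\sum_j C_{kj}q_j=\sum_j q_j\cpay{\xbf'}{e_j}=Q\cdot\cpay{\xbf'}{\zbf}$ (since $q_j=Q z_j$), gives $a^\star\leq -P\cdot\rpay{\xbf'}{\ybfs}-Q\cdot\cpay{\xbf'}{\ybfs}+Q\cdot\cpay{\xbf'}{\zbf}$. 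Symmetrically, multiplying the $l$-th constraint on $b$ by $y'_l\geq 0$, summing over $l$, and using $p_i=P w_i$ yields $b^\star\leq -P\cdot\rpay{\xbfs}{\ybf'}-Q\cdot\cpay{\xbfs}{\ybf'}+P\cdot\rpay{\wbf}{\ybf'}$.

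Finally, I would substitute the two bounds into $g(\xbfs,\ybfs)\leq P\cdot\rpay{\xbfs}{\ybfs}+Q\cdot\cpay{\xbfs}{\ybfs}+a^\star+b^\star+\delta$ and collect the terms carrying the factor $P$ and those carrying the factor $Q=1-P$; the resulting inequality is exactly the claim. The only delicate points, and where careless bookkeeping could go wrong, are: (i) verifying that $\dual(\xbfs,\ybfs)$ as displayed really is the LP dual of $\primal(\xbfs,\ybfs)$ --- matching the free variable $a$ with the equality $\sum_k x'_k=1$ and $b$ with $\sum_l y'_l=1$, noting that the free sign of $\gamma$ forces $P+Q=1$, and that the nonnegativity of $\xbf',\ybf'$ becomes the displayed $\leq$-constraints on $a,b$; and (ii) the degenerate cases $P=0$ or $Q=0$, where $\wbf$ or $\zbf$ is undefined but the corresponding bracket is annihilated by its zero coefficient, so one simply reads $P\cdot\rpay{\wbf}{\cdot}$ as $\sum_i p_i\rpay{e_i}{\cdot}$ and $Q\cdot\cpay{\cdot}{\zbf}$ as $\sum_j q_j\cpay{\cdot}{e_j}$, unnormalized. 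I expect no genuinely hard step here; the main risk is an arithmetic slip in expanding the dual and regrouping the $P$- and $Q$-terms.
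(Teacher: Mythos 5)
Your proposal is correct and follows essentially the same route as the paper's proof: bound $g(\xbfs,\ybfs)$ by $\gamma+\delta$ at the terminating profile, apply strong LP duality to write $\gamma = P\cdot\rpay{\xbfs}{\ybfs}+(1-P)\cdot\cpay{\xbfs}{\ybfs}+a+b$, and bound $a$ and $b$ by averaging the dual constraints with the weights $x'_k$ and $y'_l$ after substituting $p_i=P w_i$, $q_j=(1-P)z_j$. The only cosmetic difference is that you justify $g(\xbfs,\ybfs)\leq\gamma+\delta$ directly from the stopping rule in Step~3 of the Descent phase (and you explicitly flag the degenerate $P\in\{0,1\}$ bookkeeping), whereas the paper cites \cite{TS08,DFSS17} for this stationarity property; the substance of the argument is the same.
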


\begin{proof}
In~\cite{TS08} and in \cite{DFSS17} it was proven that for any $\delta$-stationary strategy profile, it holds that $g(\xbfs,\ybfs) \leq \gamma + \delta$, where $\gamma$ is the optimal solution of $\primal(\xbfs,\ybfs)$. Hence, in order to prove the lemma it suffices to bound the value of $\gamma$. We will do this by using the dual linear program.
It is easy to see first of all that the primal program is feasible and bounded, since the strategies belong to the simplex and also $\gamma$ is bounded by below when the input of the primal is the $\delta$-stationary point. This means that it has an optimal solution and the same holds for the dual program as well. Therefore, we can apply the LP Duality theorem, and have that for any pair of primal and dual optimal solutions for $\primal(\xbfs,\ybfs)$ and $\dual(\xbfs,\ybfs)$ respectively, the objective functions are equal. This yields:
\begin{align}
\label{eq:lpduality}
    \gamma = a + b + P \cdot \rpay{\xbfs}{\ybfs} + (1-P)\cdot \cpay{\xbfs}{\ybfs}.
\end{align}
In addition, from the constraints of $\dual(\xbfs,\ybfs)$, we have the following two inequalities
\begin{align*}
a & \leq -P \cdot \rpay{e_k}{\ybfs} - (1-P)\cdot\cpay{e_k}{\ybfs} + (1-P)\cdot \cpay{e_k}{\zbf}, \quad \forall k \in [n],\\
b & \leq -(1-P)\cdot \cpay{\xbfs}{e_l} -P\cdot \rpay{\xbfs}{e_l} + P\cdot \rpay{\wbf}{e_l}, \quad \forall l \in [n].
\end{align*}
Hence, since the inequalities above hold for every $k \in [n]$ and every $l \in [n]$, it must be true that for any profile $(\xbf',\ybf')$ it holds that (by multiplying each inequality involving $a$ with $x_k'$ and adding them all up for all $k\in [n]$, and similarly for $b$)  
\begin{align*}
a & \leq -P \cdot \rpay{\xbf'}{\ybfs} - (1-P)\cdot\cpay{\xbf'}{\ybfs} + (1-P)\cdot \cpay{\xbf'}{\zbf},\\
b & \leq -(1-P)\cdot \cpay{\xbfs}{\ybf'} -P\cdot \rpay{\xbfs}{\ybf'} + P\cdot \rpay{\wbf}{\ybf'}.
\end{align*}
The lemma follows by replacing these bounds for $a$ and $b$ in \eqref{eq:lpduality}.
\end{proof}

Lemma~\ref{lem:stationary-bound} plays a crucial role as it allows us to bound $g(\xbfs, \ybfs)$ in terms of $\lambda, \mu$ and $P$, by making appropriate choices for $\xbf'$ and $\ybf'$.
This is used both in the following lemma and in Lemma \ref{lem:stationary_bounds} of Section \ref{sec:algo}.

\begin{lemma}[\cite{TS08}]
\label{lem:TS-lambda-mu-bound}
Let $(\xbfs,\ybfs)$ be a $\delta$-stationary point produced by the Descent phase, for a constant $\delta>0$, and let $P$ be obtained by an optimal solution of $\dual(\xbfs, \ybfs)$. It holds that $g(\xbfs, \ybfs) \leq \min \{P\cdot\lambda, (1-P)\cdot \mu\} + \delta \leq \frac{\lambda\cdot\mu}{\lambda + \mu} + \delta\leq \frac{\lambda + \mu}{4} +\delta$.
\end{lemma}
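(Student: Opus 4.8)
The plan is to derive the three successive inequalities by specializing Lemma~\ref{lem:stationary-bound} to two carefully chosen strategy profiles and then combining the resulting bounds. First I would take $(\xbf',\ybf') = (\wbf,\ybfs)$ in Lemma~\ref{lem:stationary-bound}. Since \wbf is a best response against \ybfs, the column-player contribution $\cpay{\xbf'}{\zbf} - \cpay{\xbf'}{\ybfs}$ and the cross terms should collapse, and the row-player term $\rpay{\wbf}{\ybfs} - \rpay{\wbf}{\ybfs} - \rpay{\xbfs}{\ybfs} + \rpay{\xbfs}{\ybfs}$ cancels down, leaving (after tracking the few remaining terms) a bound of the form $g(\xbfs,\ybfs) \le P\cdot\lambda + \delta$; here the definition $\lambda = \rpay{\wbf}{\zbf} - \rpay{\xbfs}{\zbf}$ in \eqref{eq:lambda-mu} is what makes the leftover expression equal to $P\lambda$. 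Symmetrically, taking $(\xbf',\ybf') = (\xbfs,\zbf)$ and using that \zbf is a best response against \xbfs, together with $\mu = \cpay{\wbf}{\zbf} - \cpay{\wbf}{\ybfs}$, should give $g(\xbfs,\ybfs) \le (1-P)\cdot\mu + \delta$. Taking the minimum of the two yields $g(\xbfs,\ybfs) \le \min\{P\lambda,(1-P)\mu\} + \delta$.

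For the second inequality, I would bound $\min\{P\lambda,(1-P)\mu\}$ over $P\in[0,1]$ by the value it would take if the two arguments were equal: the minimum of a decreasing-in-$(1-P)$ and increasing-in-$P$ pair is maximized when $P\lambda = (1-P)\mu$, i.e. at $P = \mu/(\lambda+\mu)$, giving the common value $\lambda\mu/(\lambda+\mu)$. Since the actual $P$ from the dual need not be this optimizer, the inequality $\min\{P\lambda,(1-P)\mu\} \le \lambda\mu/(\lambda+\mu)$ holds regardless. Finally, the third inequality $\frac{\lambda\mu}{\lambda+\mu} \le \frac{\lambda+\mu}{4}$ is the AM–GM / harmonic-mean estimate: it rearranges to $4\lambda\mu \le (\lambda+\mu)^2$, i.e. $(\lambda-\mu)^2 \ge 0$.

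The only genuinely delicate step is the first one: verifying that the substitutions $(\wbf,\ybfs)$ and $(\xbfs,\zbf)$ really make the messy right-hand side of Lemma~\ref{lem:stationary-bound} simplify exactly to $P\lambda + \delta$ and $(1-P)\mu + \delta$. This requires being careful that the bilinear cross terms $\rpay{\xbfs}{\ybf'}$, $\cpay{\xbf'}{\ybfs}$, etc., cancel correctly and that the ``best response against \ybfs / \xbfs'' property of \wbf and \zbf is invoked in the right places (e.g., to argue $\rpay{\wbf}{\ybfs} = \rpay{\xbfs}{\ybfs}$ is not what we need, but rather to handle the column-player line when $\ybf' = \ybfs$, where $\cpay{\xbf'}{\zbf} - \cpay{\xbf'}{\ybfs} - \cpay{\xbfs}{\ybfs} + \cpay{\xbfs}{\ybfs} = \cpay{\wbf}{\zbf} - \cpay{\wbf}{\ybfs} = \mu$ should be bounded away — so one must check the sign and the coefficient $1-P$ line up). Everything after that is elementary calculus/algebra, so I expect the write-up to be short once the bookkeeping in the first step is pinned down.
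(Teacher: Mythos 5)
Your overall route is the same as the paper's: specialize Lemma~\ref{lem:stationary-bound} to the two profiles $(\xbfs,\zbf)$ and $(\wbf,\ybfs)$, take the minimum of the two resulting bounds, maximize $\min\{P\cdot\lambda,(1-P)\cdot\mu\}$ over $P$ at the equalizing point $P=\frac{\mu}{\lambda+\mu}$, and finish with the rearrangement of $(\lambda-\mu)^2\geq 0$. Your second and third inequalities are argued exactly as in the paper and are correct, including the observation that the bound holds for the actual dual $P$ because the equalizing point is a maximizer of the min.

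The concrete problem is in your first step: you have paired the two substitutions with the wrong bounds, and the derivation you sketch for ``$(\wbf,\ybfs)$ gives $P\cdot\lambda$'' cannot be carried out. With $\xbf'=\wbf$ and $\ybf'=\ybfs$, the row-player bracket in Lemma~\ref{lem:stationary-bound} is $\rpay{\wbf}{\ybfs}-\rpay{\wbf}{\ybfs}-\rpay{\xbfs}{\ybfs}+\rpay{\xbfs}{\ybfs}=0$, while the column-player bracket survives as $\cpay{\wbf}{\zbf}-\cpay{\wbf}{\ybfs}=\mu$, so this substitution yields $g(\xbfs,\ybfs)\leq(1-P)\cdot\mu+\delta$; the quantity $\rpay{\wbf}{\zbf}$, and hence $\lambda$, never appears, so your appeal to the definition of $\lambda$ at that point is vacuous. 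Symmetrically, $(\xbf',\ybf')=(\xbfs,\zbf)$ annihilates the column-player bracket and leaves $\rpay{\wbf}{\zbf}-\rpay{\xbfs}{\zbf}=\lambda$ with coefficient $P$, giving $g(\xbfs,\ybfs)\leq P\cdot\lambda+\delta$. Your closing paragraph in fact computes the correct value $(1-P)\cdot\mu$ for the $(\wbf,\ybfs)$ substitution, contradicting your first paragraph, but you leave this as ``bookkeeping to be pinned down'' rather than resolving it; swapping the two attributions repairs the proof completely. Note also that no best-response property of \wbf or \zbf is needed in this step --- the cancellations are purely algebraic identities of the bilinear payoff functions (best-response considerations enter only in facts such as $\rpay{\wbf}{\zbf}\geq\lambda$, which anyway follows just from $\rpay{\xbfs}{\zbf}\geq 0$).
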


\begin{proof}
The first inequality follows from Lemma~\ref{lem:stationary-bound}, since:
\begin{itemize}
    \item if we replace $(\xbf',\ybf')$ with $(\xbfs,\zbf)$ in the upper bound of Lemma \ref{lem:stationary-bound}, we get that $g(\xbfs, \ybfs) \leq P \cdot (\rpay{\wbf}{\zbf} - \rpay{\xbfs}{\zbf}) + \delta = P \cdot \lambda + \delta$;
    \item if we replace $(\xbf',\ybf')$ with $(\wbf,\ybfs)$, we get that $g(\xbfs, \ybfs) \leq  (1-P)\cdot (\cpay{\wbf}{\zbf} - \cpay{\wbf}{\ybfs}) + \delta = (1-P)\cdot \mu + \delta$.
\end{itemize}
Notice now that $P \cdot \lambda$ is increasing with $P$, and $(1-P)\cdot \mu$ is decreasing with $P$. Hence, the maximum of the minimum of these two linear functions is attained at the point where they are equal, i.e., for $P' = \frac{\mu}{\lambda+\mu}$ (given also that $P'\in [0, 1]$, which is obviously true). Hence the maximum regret is at most $\frac{\lambda \cdot \mu}{\lambda+\mu} + \delta$.
Finally, it is also easy to see that $\frac{\lambda \cdot \mu}{\lambda + \mu} \leq \frac{\lambda +\mu}{4}$ since $(\lambda-\mu)^2\geq 0 \Rightarrow \lambda^2+\mu^2 \geq 2\lambda\cdot \mu  \Rightarrow \lambda^2+\mu^2 +2\lambda \cdot \mu \geq 4\lambda \cdot \mu \Rightarrow (\lambda+\mu)^2\geq 4 \lambda\cdot \mu \Rightarrow \frac{\lambda \cdot \mu}{\lambda + \mu} \leq \frac{\lambda + \mu}{4}.$ 
\end{proof}

One may worry that the bound $\frac{\lambda \cdot \mu}{\lambda + \mu}$ is not well-defined when $\lambda + \mu =0$. However, as we explain below, this is not a concern.

\begin{corollary}
\label{cor:lambda-mu}
We can assume that both $\lambda> 0$ and $\mu > 0$, otherwise $(\xbfs, \ybfs)$ is a $\delta$-Nash equilibrium. 
\end{corollary}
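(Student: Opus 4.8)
The plan is to observe that the two substitutions already carried out in the proof of Lemma~\ref{lem:TS-lambda-mu-bound} remain valid \emph{regardless of the sign} of $\lambda$ and $\mu$, and that each of them gives a bound on $g(\xbfs,\ybfs)$ that collapses to $\delta$ as soon as the corresponding parameter is nonpositive. Concretely, suppose first that $\lambda \le 0$. Applying Lemma~\ref{lem:stationary-bound} with the choice $(\xbf',\ybf') = (\xbfs,\zbf)$ yields
\[
g(\xbfs,\ybfs) \le P\cdot(\rpay{\wbf}{\zbf} - \rpay{\xbfs}{\zbf}) + \delta = P\cdot\lambda + \delta,
\]
and since $P \in [0,1]$ and $\lambda \le 0$ we get $P\cdot\lambda \le 0$, hence $g(\xbfs,\ybfs) \le \delta$, i.e.\ $(\xbfs,\ybfs)$ is a $\delta$-Nash equilibrium. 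Symmetrically, if $\mu \le 0$, applying Lemma~\ref{lem:stationary-bound} with $(\xbf',\ybf') = (\wbf,\ybfs)$ gives $g(\xbfs,\ybfs) \le (1-P)\cdot\mu + \delta \le \delta$, again making $(\xbfs,\ybfs)$ a $\delta$-Nash equilibrium.

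Therefore, in either degenerate case the Strategy-construction phase can simply return $(\xbfs,\ybfs)$; the only situation in which further work is needed is $\lambda > 0$ and $\mu > 0$, which justifies the assumption for the remainder of the analysis. (As a side remark, this also settles the well-definedness worry about $\tfrac{\lambda\cdot\mu}{\lambda+\mu}$: once $\lambda,\mu>0$ the denominator is strictly positive.)

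I do not anticipate a genuine obstacle here — the statement is an immediate consequence of Lemma~\ref{lem:stationary-bound} together with $P\in[0,1]$. The only point requiring a line of care is to note explicitly that the derivation of $g(\xbfs,\ybfs)\le P\lambda+\delta$ (resp.\ $\le (1-P)\mu+\delta$) nowhere used $\lambda\ge 0$ (resp.\ $\mu\ge 0$): it is obtained purely by substituting a feasible profile into the inequality of Lemma~\ref{lem:stationary-bound}, so it is valid unconditionally.
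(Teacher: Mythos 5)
Your proposal is correct and follows essentially the same route as the paper: the paper simply cites the bound $g(\xbfs,\ybfs)\le\min\{P\cdot\lambda,(1-P)\cdot\mu\}+\delta$ from Lemma~\ref{lem:TS-lambda-mu-bound} (which was itself obtained by exactly your two substitutions into Lemma~\ref{lem:stationary-bound}) and concludes from $P\ge 0$ that $\lambda\le 0$ would force $g(\xbfs,\ybfs)\le\delta$, arguing symmetrically for $\mu$. Your only difference is inlining that derivation and remarking explicitly that it never used the sign of $\lambda$ or $\mu$, which is a harmless (indeed clarifying) restatement of the same argument.
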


\begin{proof}
Consider the first bound that was established in Lemma \ref{lem:TS-lambda-mu-bound}, that $g(\xbfs, \ybfs) \leq \min \{P\cdot\lambda, (1-P)\cdot \mu\} + \delta$.
As $P\geq 0$, then if $\lambda \leq 0$,  we would have $g(\xbfs, \ybfs) \leq \delta$. Thus, $(\xbfs, \ybfs)$ would be a $\delta$-Nash equilibrium, and since $\delta$ is a small constant, we would have a $\frac{1}{3}$-Nash equilibrium. In the same manner, we can argue that $\mu > 0$.
\end{proof}

The definitions of $\lambda$ and $\mu$, along with Lemma~\ref{lem:TS-lambda-mu-bound} can immediately be used to prove that Cases 1-3 from the Strategy-construction Phase return a $(\frac{1}{3}+\delta)$-Nash equilibrium. Hence, the bottleneck of the \TS algorithm comes from Cases 4 and 5.
In fact, it was also recently shown in \cite{ChenDHLL21-Deng-TS-tight} that the analysis of these cases in \cite{TS08} is tight, and therefore one needs to come up with a different construction in order to obtain an improvement.

\begin{lemma}[\cite{TS08}]
\label{lem:TS-good-cases}
Cases 1-3 from the Strategy-construction Phase return a $(\frac{1}{3}+\delta)$-Nash equilibrium.
\end{lemma}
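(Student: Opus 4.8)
The plan is to handle the three cases one at a time, bounding by $\tfrac13+\delta$ the maximum regret $g$ of the profile that is returned in each of them. Cases~1 and~3 return the $\delta$-stationary profile $(\xbfs,\ybfs)$, so there it suffices to invoke Lemma~\ref{lem:TS-lambda-mu-bound}, which already bounds $g(\xbfs,\ybfs)$ in terms of $\lambda$ and $\mu$; Case~2 returns the dual profile $(\wbf,\zbf)$, so there I would bound $\rreg(\wbf,\zbf)$ and $\creg(\wbf,\zbf)$ directly, using that $\wbf,\zbf$ are best responses against $\ybfs,\xbfs$ respectively together with the Fact stated just before the Strategy-construction Phase.

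\emph{Cases~1 and~3.} By Corollary~\ref{cor:lambda-mu} we may assume $\lambda,\mu>0$, so the bound of Lemma~\ref{lem:TS-lambda-mu-bound} can be put in harmonic-mean form,
\begin{align*}
g(\xbfs,\ybfs)\ \le\ \frac{\lambda\mu}{\lambda+\mu}+\delta\ =\ \Bigl(\tfrac1\lambda+\tfrac1\mu\Bigr)^{-1}+\delta .
\end{align*}
In Case~1 one of $\lambda,\mu$ is at most $\tfrac12$ and, by the Fact, both are at most $1$; hence $\tfrac1\lambda+\tfrac1\mu\ge 2+1=3$ and $g(\xbfs,\ybfs)\le\tfrac13+\delta$. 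In Case~3 both $\lambda$ and $\mu$ lie in $(\tfrac12,\tfrac23]$, and since $h(\lambda,\mu)=\tfrac{\lambda\mu}{\lambda+\mu}$ is nondecreasing in each variable on the positive orthant (indeed $\partial h/\partial\lambda=\mu^2/(\lambda+\mu)^2\ge0$, and symmetrically in $\mu$), it is maximized over $(0,\tfrac23]^2$ at $\lambda=\mu=\tfrac23$, where it equals $\tfrac13$; thus again $g(\xbfs,\ybfs)\le\tfrac13+\delta$.

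\emph{Case~2.} Here $\min\{\lambda,\mu\}\ge\tfrac23$ and the algorithm returns $(\wbf,\zbf)$. Since $\wbf$ is a best response against $\ybfs$ and $\zbf$ a best response against $\xbfs$, and since by the Fact $\rpay{\wbf}{\zbf}\ge\lambda$ and $\cpay{\wbf}{\zbf}\ge\mu$, the normalization $R,C\in[0,1]^{n\times n}$ yields
\begin{align*}
\rreg(\wbf,\zbf)=\max_i\rpay{e_i}{\zbf}-\rpay{\wbf}{\zbf}\ \le\ 1-\lambda\ \le\ \tfrac13 ,
\end{align*}
and symmetrically $\creg(\wbf,\zbf)\le 1-\mu\le\tfrac13$, so $g(\wbf,\zbf)\le\tfrac13$.

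I do not anticipate any genuine difficulty here: the statement is essentially a bookkeeping consequence of Lemma~\ref{lem:TS-lambda-mu-bound} and the Fact. The only points that need a little care are the elementary monotonicity of $\lambda\mu/(\lambda+\mu)$ used in Case~3 and keeping track of which of $\lambda,\mu$ is the smaller one in Case~1 (which is made moot by the symmetry of the bound). It is also reassuring that $\tfrac13$ is attained exactly at the case boundaries ($\lambda=\tfrac12,\mu=1$ for Case~1 and $\lambda=\mu=\tfrac23$ for Case~3), which is precisely why the thresholds $\tfrac12$ and $\tfrac23$ show up in the case split, and why the leftover regime $\min\{\lambda,\mu\}\in(\tfrac12,\tfrac23)$ with $\max\{\lambda,\mu\}>\tfrac23$ (Cases~4 and~5) is exactly the part that this simple reasoning does not reach.
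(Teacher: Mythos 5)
Your proof is correct and follows essentially the same route as the paper: Cases~1 and~3 via the bound $\frac{\lambda\mu}{\lambda+\mu}+\delta$ from Lemma~\ref{lem:TS-lambda-mu-bound} together with monotonicity (your harmonic-mean reformulation is just a cosmetic variant of the paper's $\frac{\min\{\lambda,\mu\}}{\min\{\lambda,\mu\}+1}$ argument), and Case~2 via the Fact that $\rpay{\wbf}{\zbf}\geq\lambda$, $\cpay{\wbf}{\zbf}\geq\mu$ plus the $[0,1]$ normalization. Your explicit appeal to Corollary~\ref{cor:lambda-mu} to ensure $\lambda,\mu>0$ is a small tidy addition, but the substance of the argument is the same.
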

\begin{proof}
We will consider every case independently.
\begin{itemize}
    \item If $\min\{\lambda,\mu\}\leq \frac{1}{2}$, by Lemma~\ref{lem:TS-lambda-mu-bound} we have that $g(\xbfs,\ybfs)\leq \frac{\lambda \cdot \mu}{\lambda+\mu}+\delta\leq \frac{\min\{\lambda,\mu\}}{\min\{\lambda,\mu\} +1}+\delta \leq \frac{1/2}{1/2+1}+\delta \leq \frac{1}{3}+\delta$. Here, the second inequality comes from the fact that $\frac{\lambda \cdot \mu}{\lambda + \mu}$ is an increasing function of $\max\{\lambda,\mu\}$, and also $\max\{\lambda,\mu\}\leq 1$. 
    \item If $\min\{\lambda,\mu\} \geq \frac{2}{3}$, then $g(\wbf,\zbf) \leq \max\{1-\rpay{\wbf}{\zbf}, 1-\cpay{\wbf}{\zbf}\}$. But 
since $\rpay{\wbf}{\zbf}\geq \lambda$ and $\cpay{\wbf}{\zbf}\geq \mu$, we have that the regret is at most $1- \min\{\lambda,\mu\}\leq \frac{1}{3}$.
    \item If $\min\{\lambda,\mu\}> \frac{1}{2}$ and $\max\{\lambda,\mu\}\leq \frac{2}{3}$, by Lemma~\ref{lem:TS-lambda-mu-bound} we have $g(\xbfs,\ybfs)\leq \frac{\lambda \cdot \mu}{\lambda + \mu}+\delta \leq \frac{\frac{2}{3}\cdot \frac{2}{3}}{\frac{2}{3} +\frac{2}{3}}+\delta \leq \frac{1}{3}+\delta$, since $\frac{\lambda\cdot \mu}{\lambda +\mu}$ is an increasing function of $\lambda$ and $\mu$.
\end{itemize}
\end{proof}

Thus, in the next section, we will focus on the remaining cases, when $\min\{\lambda,\mu\}\in(\frac{1}{2},\frac{2}{3}]$ and $\max\{\lambda,\mu\}\in(\frac{2}{3},1]$.

\section{Improved Strategy-construction Phase}
\label{sec:algo}

In this section we replace Cases 4 and 5 from the original \TS algorithm in order to bypass the bottleneck in the approximation. To do so, we utilize the $\delta$-stationary point $(\xbfs, \ybfs)$, the dual strategies $\wbf, \zbf$, their convex combinations and best-response strategies to such combinations. 
We then perform a more refined analysis and prove that in every case we can efficiently construct a tailored strategy profile that is a $(\frac{1}{3}+\delta)$-Nash equilibrium.

\medskip
\noindent
Our new Strategy-construction phase works as follows.

\begin{tcolorbox}[title={Improved Strategy-construction Phase}]
{\bf Input:} A $\delta$-stationary point $(\xbfs,\ybfs)$ from the Descent phase, the dual strategies \wbf, \zbf, and the parameters $\lambda, \mu$.\\
\smallskip
{\bf 1.} If $\min\{\lambda, \mu\} \leq \frac{1}{2}$, then return $(\xbfs,\ybfs)$.\\
\smallskip
{\bf 2.} If $\min\{\lambda, \mu\} \geq \frac{2}{3}$, then return $(\wbf,\zbf)$.\\
\smallskip
{\bf 3.} If $\min\{\lambda, \mu\} > \frac{1}{2}$ and $\max\{\lambda, \mu\} \leq \frac{2}{3}$, then return $(\xbfs,\ybfs)$.\\
{\bf 4.} If $\frac{1}{2}< \lambda \leq \frac{2}{3} < \mu$:
\begin{itemize}
\item Set $\hat{\ybf} = \frac{1}{2}\cdot \ybfs +\frac{1}{2}\cdot \zbf$.
\item Find a best response $\hat{\wbf}$ against $\hat{\ybf}$.
\item Set $\tvar = \rpay{\hat{\wbf}}{\hat{\ybf}} - \rpay{\wbf}{\hat{\ybf}}; \quad \vvar = \rpay{\wbf}{\ybfs} - \rpay{\hat{\wbf}}{\ybfs}; \quad \hat{\mu} = \cpay{\hat{\wbf}}{\zbf} - \cpay{\hat{\wbf}}{\ybfs}$.

\begin{itemize}
    \item[{\bf 4.1}] If $\vvar + \tvar \geq \frac{\mu-\lambda}{2}$ and $\hat{\mu}\geq \mu-\vvar-\tvar$, then set $p = \frac{2\cdot(\vvar+\tvar)-(\mu-\lambda)}{2\cdot(\vvar+\tvar)}$ and return the strategy profile with the minimum regret among $(p\cdot\wbf+(1-p)\cdot\hat{\wbf},\zbf)$ and $(\xbfs,\ybfs)$.
    \item[{\bf 4.2}] Else, set $q = \frac{1-\mu/2-\tvar}{1+\mu/2-\lambda - \tvar}$ and return the strategy profile with the minimum regret among $(\wbf,(1-q)\cdot \hat{\ybf} + q \cdot \zbf)$ and $(\xbfs,\ybfs)$.
\end{itemize}
\end{itemize}
{\bf 5.} If $\frac{1}{2}< \mu\leq \frac{2}{3} < \lambda$ (symmetric to Case 4):
\begin{itemize}
    \item Set $\hat{\xbf} = \frac{1}{2}\cdot \xbfs +\frac{1}{2}\cdot \wbf$.
    \item Find a best response $\hat{\zbf}$ against $\hat{\xbf}$.
    \item Set 
    $\tvarc = \cpay{\hat{\xbf}}{\hat{\zbf}} - \cpay{\hat{\xbf}}{\zbf}; \quad 
    \vvarc = \cpay{\xbfs}{\zbf} - \cpay{\xbfs}{\hat{\zbf}}; \quad 
    \hat{\lambda} = \rpay{\wbf}{\hat{\zbf}} - \rpay{\xbfs}{\hat{\zbf}}$.
    
\begin{itemize}
\item[{\bf 5.1}] If $\vvarc + \tvarc \geq \frac{\lambda-\mu}{2}$ and $\hat{\lambda} \geq \lambda - \vvarc - \tvarc$, then set $p = \frac{2\cdot(\vvarc+\tvarc)-(\lambda-\mu)}{2\cdot(\vvarc+\tvarc)}$ and
return the strategy profile with the minimum regret among $(\wbf,p\cdot\zbf+(1-p)\cdot\hat{\zbf})$ and $(\xbfs,\ybfs)$.
\item[{\bf 5.2}] Else, set $q = \frac{1-\lambda/2-\tvarc}{1+\lambda/2-\mu -\tvarc}$ and return the strategy profile with the minimum regret among $((1-q)\cdot \hat{\xbf} + q \cdot \wbf,\zbf)$ and $(\xbfs,\ybfs)$.
\end{itemize}
\end{itemize}
\end{tcolorbox}

\noindent
Note that Cases 1-3 are identical to the Strategy-construction phase of the \TS algorithm. Thus, by Lemma~\ref{lem:TS-good-cases} they return a $(\frac{1}{3}+\delta)$-Nash equilibrium.
The new part concerns Cases 4 and 5.
The analysis in both cases is based on certain auxiliary parameters 
($\vvar, \tvar$ and $\hat{\mu}$ for Case 4 and analogously for Case 5), that we define in the statement of the algorithm. These parameters encode payoff differences or regrets of the players for using specific strategies, and they  help us decompose the problem into convenient subcases, so as to obtain better upper bounds on the maximum regret.

\medskip
\noindent Our main result is as follows:

\begin{theorem}
For any constant $\delta>0$, we can compute in polynomial-time a $(\frac{1}{3}+\delta)$-Nash equilibrium.
\end{theorem}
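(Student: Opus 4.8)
\emph{Proof plan.} The plan is to run the \TS pipeline with our modified last stage. First run the Descent phase, which by the analysis of \cite{TS08} (the first lemma of Section~\ref{sec:TS-algo}) returns, in time polynomial in the game size and in $1/\delta$, a $\delta$-stationary profile $(\xbfs,\ybfs)$ with $\rreg(\xbfs,\ybfs)=\creg(\xbfs,\ybfs)$; then solve the single linear program $\dual(\xbfs,\ybfs)$ to obtain $\wbf,\zbf,P$ and the scalars $\lambda,\mu$ of \eqref{eq:lambda-mu}, which by Corollary~\ref{cor:lambda-mu} we may take to be positive; then run the Improved Strategy-construction phase, all of whose ingredients ($\hat{\ybf}$, a best response $\hat{\wbf}$, the scalars $\tvar,\vvar,\hat{\mu}$, the weights $p,q$, and their Case~5 mirrors) are computable in polynomial time. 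Hence the running time is polynomial and only the approximation guarantee needs work. Cases~1--3 are identical to the original algorithm, so Lemma~\ref{lem:TS-good-cases} already gives a $(\tfrac13+\delta)$-Nash equilibrium; Cases~4 and~5 are mirror images (swap the players together with $\lambda\leftrightarrow\mu$, $\wbf\leftrightarrow\zbf$, $\hat{\ybf}\leftrightarrow\hat{\xbf}$, $\tvar\leftrightarrow\tvarc$, $\vvar\leftrightarrow\vvarc$, $\hat{\mu}\leftrightarrow\hat{\lambda}$), so it suffices to handle Case~4, i.e.\ $\tfrac12<\lambda\le\tfrac23<\mu$.

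In Case~4 I would first set up three reductions. (a)~If $g(\xbfs,\ybfs)\le\tfrac13+\delta$ we output $(\xbfs,\ybfs)$; since Lemma~\ref{lem:TS-lambda-mu-bound} gives $g(\xbfs,\ybfs)\le\frac{\lambda\mu}{\lambda+\mu}+\delta\le\frac{\lambda+\mu}{4}+\delta$, this happens whenever $\lambda+\mu<\tfrac43$, so we may also assume $g(\xbfs,\ybfs)>\tfrac13$ and $\lambda+\mu\ge\tfrac43$. (b)~Plugging $(\xbf',\ybf')=(\xbfs,\zbf)$, $(\wbf,\ybfs)$ and $(\hat{\wbf},\ybfs)$ into Lemma~\ref{lem:stationary-bound} yields $g(\xbfs,\ybfs)\le P\lambda+\delta$, $g(\xbfs,\ybfs)\le(1-P)\mu+\delta$ and $g(\xbfs,\ybfs)\le P\vvar+(1-P)\hat{\mu}+\delta$; combined with $g(\xbfs,\ybfs)>\tfrac13$ these force $P>\tfrac1{3\lambda}$, $1-P>\tfrac1{3\mu}$ and $P\vvar+(1-P)\hat{\mu}>\tfrac13$ (these are what Lemma~\ref{lem:stationary_bounds} records), and further instantiations — e.g.\ $(\wbf,\hat{\ybf})$ and $(\hat{\wbf},\hat{\ybf})$, using that the equal regrets at $(\xbfs,\ybfs)$ make both ``slacks'' $\rpay{\wbf}{\ybfs}-\rpay{\xbfs}{\ybfs}$ and $\cpay{\xbfs}{\zbf}-\cpay{\xbfs}{\ybfs}$ equal to $g(\xbfs,\ybfs)$ — give additional linear constraints mixing $\lambda,\mu,\vvar,\hat{\mu},P$. (c)~From $\hat{\ybf}=\tfrac12\ybfs+\tfrac12\zbf$, the best-response optimality of $\wbf,\hat{\wbf}$, and $\rpay{\wbf}{\zbf}\ge\lambda$, one gets $\tvar,\vvar\ge0$, the key identity $\rpay{\hat{\wbf}}{\zbf}-\rpay{\wbf}{\zbf}=2\tvar+\vvar$, and (by convexity of regret in the opponent's strategy) $\tvar\le\frac{1-\lambda}{2}\le\tfrac14$.

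Then I would split exactly as the algorithm does. In subcase~4.1 the hypothesis $\vvar+\tvar\ge\frac{\mu-\lambda}{2}$ makes $p=1-\frac{\mu-\lambda}{2(\vvar+\tvar)}\in[0,1)$, and for the constructed row strategy $\tilde{\wbf}=p\wbf+(1-p)\hat{\wbf}$ the identity in (c) gives $\rpay{\tilde{\wbf}}{\zbf}=\rpay{\wbf}{\zbf}+(1-p)(2\tvar+\vvar)\ge\lambda+\frac{\mu-\lambda}{2}=\frac{\lambda+\mu}{2}$, while $\hat{\mu}\ge\mu-\vvar-\tvar$ together with $\cpay{\hat{\wbf}}{\zbf}\ge\hat{\mu}$ and $\cpay{\wbf}{\zbf}\ge\mu$ gives $\cpay{\tilde{\wbf}}{\zbf}\ge\mu-(1-p)(\vvar+\tvar)=\frac{\lambda+\mu}{2}$; since $\lambda+\mu\ge\tfrac43$ both are $\ge\tfrac23$, so each player has regret at most $1-\tfrac23=\tfrac13$ at $(\tilde{\wbf},\zbf)$. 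In subcase~4.2 one first checks $q=\frac{1-\mu/2-\tvar}{1+\mu/2-\lambda-\tvar}\in[0,1]$ (using $\tvar\le\tfrac14\le1-\mu/2$ and $\lambda<\mu$), and then for $\check{\ybf}=(1-q)\hat{\ybf}+q\zbf$ a short calculation (from $\cpay{\wbf}{\hat{\ybf}}\ge\mu/2$, $\rpay{\wbf}{\hat{\ybf}}=\rpay{\hat{\wbf}}{\hat{\ybf}}-\tvar$ and $\max_i\rpay{e_i}{\check{\ybf}}\le(1-q)\rpay{\hat{\wbf}}{\hat{\ybf}}+q$) shows both players' regrets at $(\wbf,\check{\ybf})$ equal $1-\tfrac\mu2(1+q)$, which the choice of $q$ drives to $\le\tfrac13$ except in a thin region of the parameters (essentially $\mu+2\lambda<2$ with $\tvar$ near $\frac{1-\lambda}{2}$); there I would instead return $(\xbfs,\ybfs)$, deriving a contradiction with $g(\xbfs,\ybfs)>\tfrac13$ from the constraints of (b) together with whichever trigger sent us to subcase~4.2 ($\vvar+\tvar<\frac{\mu-\lambda}{2}$ or $\hat{\mu}<\mu-\vvar-\tvar$). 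Since both subcases return the better of the constructed profile and $(\xbfs,\ybfs)$, Case~4 is done, and Case~5 is symmetric.

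I expect the routine parts — polynomial running time, Cases~1--3, verifying $p,q\in[0,1]$, the clean halves of subcases~4.1 and~4.2, and the fully analogous Case~5 — to go through without surprises. The main obstacle is the thin leftover region of subcase~4.2 flagged above: there neither the elementary regret bound $1-\tfrac\mu2(1+q)$ on the constructed profile nor the bound $\frac{\lambda\mu}{\lambda+\mu}+\delta$ on $g(\xbfs,\ybfs)$ is by itself $\le\tfrac13+\delta$, so one must combine the right family of Lemma~\ref{lem:stationary-bound} instantiations with exactly the subcase split and convex weights $p,q$ used in the algorithm, and show that they jointly cover \emph{every} admissible tuple $(\lambda,\mu,P,\tvar,\vvar,\hat{\mu})$. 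Pinning down these choices is the heart of the improvement and explains why the original \TS construction — shown tight in \cite{ChenDHLL21-Deng-TS-tight} — stopped at $0.3393$; the surrounding estimates are mechanical.
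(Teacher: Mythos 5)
Your overall architecture matches the paper's: same pipeline (Descent, dual LP, improved construction), same reduction to Case~4 by symmetry, same auxiliary quantities, the same identity $\rpay{\hat{\wbf}}{\zbf}=\rpay{\wbf}{\zbf}+\vvar+2\tvar$ (the paper's Lemma~\ref{lem:rwz-bound}), and your treatment of Case~4.1 is essentially the paper's argument verbatim (payoffs at least $\tfrac{\lambda+\mu}{2}$, then dichotomize on $\lambda+\mu\lessgtr\tfrac43$ via Lemma~\ref{lem:TS-lambda-mu-bound}). However, there is a genuine gap exactly where you flag it: Case~4.2 is not proved, and it is the heart of the theorem. Your claim that the regrets at $(\wbf,(1-q)\hat{\ybf}+q\zbf)$ ``equal'' $1-\tfrac{\mu}{2}(1+q)$ should be ``are at most'' (it is the equalized pair of upper bounds $q(1-\lambda)+(1-q)\tvar$ of Lemma~\ref{bound at (wbf,(1-p)y'+pz)}), and your description of the leftover region (``essentially $\mu+2\lambda<2$ with $\tvar$ near $\tfrac{1-\lambda}{2}$'') is an unverified guess; you then say you would derive a contradiction ``from the constraints of (b) together with whichever trigger sent us to 4.2,'' but you never exhibit the derivation, and it is precisely this derivation that separates $\tfrac13$ from $0.3393$.

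Concretely, what is missing is the content of Section~\ref{sec:regret_bounds} and Lemma~\ref{lem:4.2}. The bound $g(\xbfs,\ybfs)\le P\vvar+(1-P)\hat{\mu}+\delta$ that you list in (b) is not used raw in the paper; it is combined with $P\lambda$ (when $\hat{\mu}\ge\vvar$) or with $(1-P)\mu$ (when $\hat{\mu}<\vvar$) to produce the refined bounds $\frac{\hat{\mu}\lambda}{\lambda+\hat{\mu}-\vvar}+\delta$ and $\frac{\vvar\mu}{\mu-\hat{\mu}+\vvar}+\delta$ (Lemmas~\ref{lem:stationary_hatm_1} and~\ref{lem:stationary_hatm2}), and the well-definedness of the equalizing $P'$ itself needs Corollary~\ref{cor:vvar-rvar-bounds}. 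Case~4.2 is then split into three subcases driven by the triggers, not by your guessed region: 4.2(i) $\vvar+\tvar<\tfrac{\mu-\lambda}{2}$, handled with only $\frac{\lambda\mu}{\lambda+\mu}>\tfrac13$ and a quadratic estimate; 4.2(ii) $\hat{\mu}<\mu-\vvar-\tvar$, $\hat{\mu}\ge\vvar$, which needs a further split on $\mu-\lambda\lessgtr\tvar$ (the branch $\mu-\lambda<\tvar$ gives $g\le\tfrac{\lambda}{2}+\delta$ directly), monotonicity in $\hat{\mu}$ and $\vvar$, the bound $\tvar<3\lambda(\lambda+\mu-1)-\mu-3\lambda+2$, and an endpoint analysis of a quadratic in $\mu$ over $\big(\tfrac{\lambda}{3\lambda-1},1\big]$; and 4.2(iii) $\hat{\mu}<\vvar$, where one first disposes of $\vvar\le\tfrac13$ via Lemma~\ref{lem:stationary_hatm2} and then uses $\tvar<\tfrac13-\tfrac{\lambda}{2}$. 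None of this algebra is mechanical in the sense your last paragraph suggests, and your proposed extra instantiations of Lemma~\ref{lem:stationary-bound} at $(\wbf,\hat{\ybf})$ and $(\hat{\wbf},\hat{\ybf})$ are not the ones that make it work. So the proposal is a correct high-level plan coinciding with the paper's, but the approximation guarantee in Case~4.2 --- the actual improvement over Tsaknakis--Spirakis --- is asserted rather than proved.
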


To prove the theorem, it suffices to analyze Case 4, where $\frac{1}{2}< \lambda\leq \frac{2}{3}<\mu$, since Case 5 is symmetric to Case 4 and is analyzed in exactly the same way.

\medskip
\noindent {\bf Intuition and Roadmap.}
The overall analysis in the sequel looks rather technical, therefore, we will first provide some elaboration on the choices that the algorithm makes in Case 4. The first crucial component in the design of the new algorithm is that the upper bounds on the regret of the $\delta$-stationary point $(\xbfs, \ybfs)$, obtained in Lemma \ref{lem:TS-lambda-mu-bound}, can be further refined based on the values of the parameters $\lambda, \mu, \hat{\mu}, \vvar$. This is precisely implemented in Section \ref{sec:regret_bounds} with Lemmas \ref{lem:stationary_bounds}, \ref{lem:stationary_hatm_1}, and \ref{lem:stationary_hatm2}. Once this is done, we then try to answer the following question: Whenever $(\xbfs, \ybfs)$ does not provide a $(\frac{1}{3}+\delta)$-approximation, which profiles can form alternative candidates for a better performance? One idea is to exploit the dual strategies $\wbf$, and $\zbf$, as was also done in \cite{TS08}. However, the profile $(\wbf, \zbf)$ may not be a $(\frac{1}{3}+\delta)$-equilibrium either (in most cases). A next attempt then is to consider appropriate convex combinations of the primal and the dual strategy for each player, i.e., a combination of $\xbfs$ and $\wbf$ for the row player and $\ybfs$ and $\zbf$ for the column player. Unfortunately, this again does not work in all cases. But one next step is to also take into consideration best-response strategies against such convex combinations. E.g., the strategy $\hat{\wbf}$ defined in Case 4 is a best response to the equiprobable combination of $\ybfs$ and $\zbf$. This completes our weaponry, and at the end, in all subcases of Case 4, we consider profiles where the row player uses a convex combination of $\wbf$ and $\hat{\wbf}$, and the column player selects a combination between her primal and dual strategies, $\ybfs$ and $\zbf$. Analogous profiles with the roles of the players reversed are constructed for Case 5 too. Finally, we also know that whenever $(\xbfs, \ybfs)$ does not attain a $(\frac{1}{3}+\delta)$-approximation, this restricts the relation between the parameters $\lambda$, $\mu$, $\hat{\mu}$ and $\vvar$ due to the lemmas of Section \ref{sec:regret_bounds}. This is exploitable for us in the sense that it allows us to construct the exact coefficients for the convex combinations that we use so as to have the desired approximation.  

To proceed, we start with two helpful observations, which are used repeatedly for the analysis of Cases 4.1 and 4.2. 

\begin{lemma}
\label{lem:rwz-bound}
It holds that $\rpay{\hat{\wbf}}{\zbf} \geq \lambda +\vvar+2\tvar$.
\end{lemma}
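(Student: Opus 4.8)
The plan is to bound $\rpay{\hat{\wbf}}{\zbf}$ from below by first relating it to $\rpay{\hat{\wbf}}{\hat{\ybf}}$, then relating that to $\rpay{\wbf}{\hat{\ybf}}$ via the definition of $\tvar$, and finally relating $\rpay{\wbf}{\hat{\ybf}}$ back to $\rpay{\wbf}{\zbf}$ and $\rpay{\wbf}{\ybfs}$, using the fact that $\hat{\ybf}=\tfrac12\ybfs+\tfrac12\zbf$ is a \emph{midpoint}. The identities I intend to exploit are the bilinearity of $\rpay{\cdot}{\cdot}$ and the two definitions from Case~4: $\tvar=\rpay{\hat{\wbf}}{\hat{\ybf}}-\rpay{\wbf}{\hat{\ybf}}$ and $\vvar=\rpay{\wbf}{\ybfs}-\rpay{\hat{\wbf}}{\ybfs}$. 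Also I will use $\rpay{\wbf}{\zbf}\ge\lambda$, which is the Fact stated just before the algorithm (since $\lambda=\rpay{\wbf}{\zbf}-\rpay{\xbfs}{\zbf}\le\rpay{\wbf}{\zbf}$).

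First I would expand $\rpay{\hat{\wbf}}{\hat{\ybf}}=\tfrac12\rpay{\hat{\wbf}}{\ybfs}+\tfrac12\rpay{\hat{\wbf}}{\zbf}$ by bilinearity, which rearranges to $\rpay{\hat{\wbf}}{\zbf}=2\rpay{\hat{\wbf}}{\hat{\ybf}}-\rpay{\hat{\wbf}}{\ybfs}$. Next I substitute $\rpay{\hat{\wbf}}{\hat{\ybf}}=\rpay{\wbf}{\hat{\ybf}}+\tvar$ (definition of $\tvar$) and $\rpay{\hat{\wbf}}{\ybfs}=\rpay{\wbf}{\ybfs}-\vvar$ (definition of $\vvar$), giving
\[
\rpay{\hat{\wbf}}{\zbf}=2\rpay{\wbf}{\hat{\ybf}}+2\tvar-\rpay{\wbf}{\ybfs}+\vvar.
\]
Then I expand $\rpay{\wbf}{\hat{\ybf}}=\tfrac12\rpay{\wbf}{\ybfs}+\tfrac12\rpay{\wbf}{\zbf}$, so $2\rpay{\wbf}{\hat{\ybf}}=\rpay{\wbf}{\ybfs}+\rpay{\wbf}{\zbf}$, and the $\rpay{\wbf}{\ybfs}$ terms cancel:
\[
\rpay{\hat{\wbf}}{\zbf}=\rpay{\wbf}{\zbf}+\vvar+2\tvar.
\]
Finally, using $\rpay{\wbf}{\zbf}\ge\lambda$ yields $\rpay{\hat{\wbf}}{\zbf}\ge\lambda+\vvar+2\tvar$, as claimed.

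This is a short algebraic identity rather than a genuine inequality argument, so I do not anticipate a real obstacle; the only point requiring care is tracking signs in the two substitutions (note $\vvar$ enters with a minus inside $\rpay{\hat{\wbf}}{\ybfs}$ but the outer minus sign flips it) and remembering that the single inequality used, $\rpay{\wbf}{\zbf}\ge\lambda$, comes directly from the Fact preceding the algorithm. If one wanted to avoid even that appeal, one could instead state the cleaner identity $\rpay{\hat{\wbf}}{\zbf}=\rpay{\wbf}{\zbf}+\vvar+2\tvar$ and deduce the stated bound as an immediate corollary.
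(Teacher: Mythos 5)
Your proposal is correct and follows essentially the same argument as the paper: both expand $\rpay{\hat{\wbf}}{\hat{\ybf}}$ via bilinearity of the payoff at the midpoint $\hat{\ybf}$, substitute the definitions of $\tvar$ and $\vvar$ to obtain the identity $\rpay{\hat{\wbf}}{\zbf}=\rpay{\wbf}{\zbf}+\vvar+2\tvar$, and conclude with $\rpay{\wbf}{\zbf}\geq\lambda$.
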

\begin{proof}
By the definition of $\tvar$, inside Case 4, we have that it holds that $\rpay{\hat{\wbf}}{\hat{\ybf}} = \rpay{\wbf}{\hat{\ybf}} + \tvar$. Hence,
\[
\frac{\rpay{\hat{\wbf}}{\ybfs}}{2} + \frac{\rpay{\hat{\wbf}}{\zbf}}{2} = \rpay{\hat{\wbf}}{\hat{\ybf}} = \frac{\rpay{\wbf}{\ybfs}}{2} + \frac{\rpay{\wbf}{\zbf}}{2} +\tvar \Rightarrow \]
\[\rpay{\hat{\wbf}}{\zbf} = (\rpay{\wbf}{\ybfs}- \rpay{\hat{\wbf}}{\ybfs}) + \rpay{\wbf}{\zbf} + 2\tvar \geq \lambda +\vvar+2\tvar, \]
since $\rpay{\wbf}{\ybfs}-\rpay{\hat{\wbf}}{\ybfs} = \vvar$, and $\rpay{\wbf}{\zbf} \geq \lambda$ (by the fact after Equation \eqref{eq:lambda-mu}).
\end{proof}

\begin{corollary}
\label{cor:vvar-rvar-bounds}
It holds that $\vvar \leq 1-\lambda -2\tvar$, or equivalently $\tvar \leq \frac{1-\lambda -\vvar}{2}$.
\end{corollary}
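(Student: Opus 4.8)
The plan is to obtain the bound directly from Lemma \ref{lem:rwz-bound}, which already establishes $\rpay{\hat{\wbf}}{\zbf} \geq \lambda + \vvar + 2\tvar$. The only extra ingredient needed is an upper bound on the left-hand side, and this comes for free from normalization: since every entry of $R$ lies in $[0,1]$, the quantity $\rpay{\hat{\wbf}}{\zbf} = \hat{\wbf}^T R \zbf$ is a convex combination (with weights given by the product distribution of $\hat{\wbf}$ and $\zbf$) of entries of $R$, hence $\rpay{\hat{\wbf}}{\zbf} \leq 1$.

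Chaining the two inequalities gives $1 \geq \rpay{\hat{\wbf}}{\zbf} \geq \lambda + \vvar + 2\tvar$, i.e.\ $\vvar \leq 1 - \lambda - 2\tvar$; isolating $\tvar$ in the same inequality yields the equivalent form $\tvar \leq \frac{1-\lambda-\vvar}{2}$. I do not anticipate any real obstacle here — the corollary is an immediate consequence of Lemma \ref{lem:rwz-bound} together with the standing assumption that all payoffs are at most $1$, and no case analysis, stationarity of $(\xbfs,\ybfs)$, or further properties of $\lambda,\mu,\hat{\mu}$ are needed. If anything, the only point worth stating explicitly is why $\rpay{\hat{\wbf}}{\zbf}\leq 1$, which is the one-line normalization argument above.
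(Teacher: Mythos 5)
Your proposal is correct and is exactly the paper's own argument: the corollary follows by chaining Lemma~\ref{lem:rwz-bound} with the normalization bound $\rpay{\hat{\wbf}}{\zbf} \leq 1$. Nothing further is needed.
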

\begin{proof}
By the previous lemma we have $\rpay{\hat{\wbf}}{\zbf} \geq \lambda +\vvar+2\tvar  \Rightarrow \vvar \leq 1-\lambda -2\tvar$, since $\rpay{\hat{\wbf}}{\zbf} \leq 1$.
\end{proof}

\subsection{Bounding the regret of $\delta$-stationary points}
\label{sec:regret_bounds}

In this subsection, we provide three crucial lemmas that provide different ways of bounding the maximum regret of any $\delta$-stationary point. The first of these lemmas is an improvement over \cite{TS08}, where we add a third upper bound for the $\delta$-stationary point, in addition to the bounds stated in Lemma \ref{lem:TS-lambda-mu-bound} from Section \ref{sec:TS-algo}. 

\begin{lemma}
\label{lem:stationary_bounds}
Let $(\xbfs,\ybfs)$ be a $\delta$-stationary point with $\delta\geq 0$, and let $P$ be obtained by an optimal solution of $\dual(\xbfs, \ybfs)$, as the sum of the dual variables: $P = \sum_{i\in B_r(\ybfs)} p_i$. It holds that $g(\xbfs,\ybfs) \leq \min\{P\cdot \lambda, (1-P)\cdot \mu, P\cdot \vvar + (1-P)\cdot \hat{\mu}\} + \delta$. 
\end{lemma}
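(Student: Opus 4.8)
The plan is to obtain all three upper bounds from the single master inequality of Lemma~\ref{lem:stationary-bound} by plugging in three different choices of the free profile $(\xbf',\ybf')$, and then to take the minimum. Recall that Lemma~\ref{lem:stationary-bound} is stated for the very $P$ appearing in the statement here, namely $P=\sum_{i\in B_r(\ybfs)}p_i$ from the dual optimum, so it applies verbatim.

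First, the two bounds $g(\xbfs,\ybfs)\le P\cdot\lambda+\delta$ and $g(\xbfs,\ybfs)\le (1-P)\cdot\mu+\delta$ are nothing new: they are exactly the two inequalities already derived inside the proof of Lemma~\ref{lem:TS-lambda-mu-bound}. Substituting $(\xbf',\ybf')=(\xbfs,\zbf)$ into Lemma~\ref{lem:stationary-bound} makes the $(1-P)$-block vanish and leaves $P\cdot(\rpay{\wbf}{\zbf}-\rpay{\xbfs}{\zbf})=P\cdot\lambda$; substituting $(\xbf',\ybf')=(\wbf,\ybfs)$ makes the $P$-block vanish and leaves $(1-P)\cdot(\cpay{\wbf}{\zbf}-\cpay{\wbf}{\ybfs})=(1-P)\cdot\mu$. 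So for these I would just cite Lemma~\ref{lem:TS-lambda-mu-bound}.

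The new, third bound is obtained from the same lemma with the choice $(\xbf',\ybf')=(\hat{\wbf},\ybfs)$, where $\hat{\wbf}$ is the best response to $\hat{\ybf}=\tfrac12\ybfs+\tfrac12\zbf$ introduced in Case~4. Since $\ybf'=\ybfs$, each term $\rpay{\xbfs}{\ybf'}$ cancels $\rpay{\xbfs}{\ybfs}$ and each term $\cpay{\xbfs}{\ybf'}$ cancels $\cpay{\xbfs}{\ybfs}$; the $P$-block then collapses to $\rpay{\wbf}{\ybfs}-\rpay{\hat{\wbf}}{\ybfs}$, which is $\vvar$ by definition, and the $(1-P)$-block collapses to $\cpay{\hat{\wbf}}{\zbf}-\cpay{\hat{\wbf}}{\ybfs}$, which is $\hat{\mu}$ by definition. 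This yields $g(\xbfs,\ybfs)\le P\cdot\vvar+(1-P)\cdot\hat{\mu}+\delta$, and combining with the previous two inequalities gives the stated minimum.

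I do not expect a genuine obstacle here; the whole content is the observation that $(\hat{\wbf},\ybfs)$ is precisely the test profile that makes the parameters $\vvar$ and $\hat{\mu}$ appear on the nose, with no sign or feasibility condition needed beyond $(\hat{\wbf},\ybfs)\in\simplex\times\simplex$, which holds since $\hat{\wbf}$ is a (pure, hence valid) best response and $\ybfs$ is a strategy. The only bookkeeping worth a sentence in the write-up is making explicit that the $P$ used in Lemma~\ref{lem:stationary-bound} is the same $P$ named in this lemma, so that the cited inequality can be invoked without change.
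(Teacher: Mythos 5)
Your proposal is correct and matches the paper's proof essentially verbatim: the paper also cites Lemma~\ref{lem:TS-lambda-mu-bound} for the bounds $P\cdot\lambda+\delta$ and $(1-P)\cdot\mu+\delta$, and obtains the third bound by substituting $(\xbf',\ybf')=(\hat{\wbf},\ybfs)$ into Lemma~\ref{lem:stationary-bound}, with the same cancellations yielding $P\cdot\vvar+(1-P)\cdot\hat{\mu}+\delta$. No gaps.
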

\begin{proof}
By Lemma \ref{lem:TS-lambda-mu-bound} it holds that $g(\xbfs,\ybfs)\leq \min\{P\cdot\lambda,(1-P)\cdot\mu\} + \delta$. So, it suffices to prove that $g(\xbfs,\ybfs)\leq P\cdot\vvar+(1-P)\cdot\hat{\mu} + \delta$. This follows from Lemma~\ref{lem:stationary-bound} when we set $(\xbf', \ybf') = (\hat{\wbf},\ybfs)$. Indeed, in this case we have $g(\xbfs,\ybfs)\leq P \cdot (\rpay{\wbf}{\ybfs} - \rpay{\hat{\wbf}}{\ybfs} - \rpay{\xbfs}{\ybfs} +\rpay{\xbfs}{\ybfs}) + (1-P)\cdot (\cpay{\hat{\wbf}}{\zbf} -\cpay{\hat{\wbf}}{\ybfs} -\cpay{\xbfs}{\ybfs}+\cpay{\xbfs}{\ybfs}) = P \cdot \vvar + (1-P)\cdot \hat{\mu} +\delta$, by the definitions of $\vvar$ and $\hat{\mu}$.
\end{proof}

The remaining two lemmas help in attaining a more fine-grained analysis on upper bounding the regret of the players, under the restrictions on the values of $\lambda$ and $\mu$ in Case 4.
\begin{lemma}
\label{lem:stationary_hatm_1}
Let $(\xbfs,\ybfs)$ be a $\delta$-stationary point with $\delta\geq 0$, and let $\hat{\mu}\geq \vvar$, and $\lambda > \frac{1}{2}$. Then, it holds that $g(\xbfs,\ybfs) \leq \frac{\hat{\mu}\cdot \lambda }{\lambda + \hat{\mu} - \vvar} +\delta$.
\end{lemma}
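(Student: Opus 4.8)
The plan is to combine the first and third upper bounds from Lemma~\ref{lem:stationary_bounds} and optimize over the dual parameter $P$. By that lemma, for the value $P\in[0,1]$ coming from the optimal dual solution we have $g(\xbfs,\ybfs) \le \min\{P\cdot\lambda,\ P\cdot\vvar + (1-P)\cdot\hat{\mu}\} + \delta$, so it suffices to show $\min\{P\cdot\lambda,\ P\cdot\vvar + (1-P)\cdot\hat{\mu}\} \le \frac{\hat{\mu}\cdot\lambda}{\lambda+\hat{\mu}-\vvar}$ for every $P\in[0,1]$. Set $h_1(P) := P\lambda$ and $h_2(P):= P\vvar + (1-P)\hat{\mu} = \hat{\mu} - P(\hat{\mu}-\vvar)$ on $[0,1]$: then $h_1$ is increasing since $\lambda>1/2>0$, while $h_2$ is non-increasing since $\hat{\mu}\ge\vvar$ by hypothesis. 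Consequently $\min\{h_1,h_2\}$ attains its maximum over $[0,1]$ at the point $P^\star$ where $h_1(P^\star)=h_2(P^\star)$; solving $P^\star\lambda = \hat{\mu}-P^\star(\hat{\mu}-\vvar)$ yields $P^\star = \frac{\hat{\mu}}{\lambda+\hat{\mu}-\vvar}$ and common value $P^\star\lambda = \frac{\hat{\mu}\lambda}{\lambda+\hat{\mu}-\vvar}$, which is exactly the claimed bound.

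What remains is to verify that $P^\star\in[0,1]$, so that this crossing point genuinely lies in the feasible range of $P$. For the denominator, $\lambda+\hat{\mu}-\vvar \ge \lambda > 1/2 > 0$, using $\hat{\mu}\ge\vvar$. For $P^\star\ge 0$ it then suffices that $\hat{\mu}\ge 0$: since $\wbf$ is by construction a best response to $\ybfs$, we have $\rpay{\wbf}{\ybfs}\ge\rpay{\hat{\wbf}}{\ybfs}$, i.e.\ $\vvar\ge 0$, and with $\hat{\mu}\ge\vvar$ this gives $\hat{\mu}\ge 0$. For $P^\star\le 1$ it suffices that $\vvar\le\lambda$: by Corollary~\ref{cor:vvar-rvar-bounds}, $\vvar\le 1-\lambda-2\tvar\le 1-\lambda$ (using $\tvar\ge 0$, which holds because $\hat{\wbf}$ is a best response to $\hat{\ybf}$), and since $\lambda>1/2$ this gives $\vvar<1/2<\lambda$. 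Hence $P^\star\in[0,1]$.

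I do not anticipate a real obstacle: the core is the elementary observation that the pointwise minimum of an increasing and a non-increasing affine function on an interval is maximized at their intersection, and the only care needed is the sign/range bookkeeping for $P^\star$ — which is exactly where the hypotheses $\hat{\mu}\ge\vvar$ and $\lambda>1/2$, together with the best-response facts $\vvar\ge 0$ and $\tvar\ge 0$, come into play. Once $P^\star\in[0,1]$ is confirmed, substituting the $P$ furnished by the dual solution into the displayed inequality and adding $\delta$ finishes the proof.
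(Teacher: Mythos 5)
Your proposal is correct and follows essentially the same route as the paper: both invoke Lemma~\ref{lem:stationary_bounds} to get $g(\xbfs,\ybfs)\leq\min\{P\cdot\lambda,\,P\cdot\vvar+(1-P)\cdot\hat{\mu}\}+\delta$, observe that this is the minimum of an increasing and a non-increasing linear function of $P$, take the crossing point $P^\star=\frac{\hat{\mu}}{\lambda+\hat{\mu}-\vvar}$, and check via Corollary~\ref{cor:vvar-rvar-bounds} and $\lambda>\frac{1}{2}$ that $P^\star\in[0,1]$. Your write-up is in fact slightly more explicit than the paper's (spelling out $\vvar\geq 0$, $\tvar\geq 0$ from the best-response definitions), but there is no substantive difference.
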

\begin{proof}
By Lemma \ref{lem:stationary_bounds} we have $g(\xbfs,\ybfs) \leq \min\{P \cdot \lambda, P\cdot \vvar + (1-P)\cdot \hat{\mu}\}+\delta$. 
Note that $P\cdot\lambda$ is an increasing linear function of $P$ and $P\cdot \vvar + (1-P)\cdot \hat{\mu}$ is a decreasing linear function of $P$, because $\hat{\mu}\geq \vvar$. Therefore, the maximum of the minimum of these two functions is achieved at the point where they are equal, which is for $P' = \frac{\hat{\mu}}{\lambda + \hat{\mu}-\vvar}$, as long as $P'\in [0, 1]$ (recall that $P$ is constrained to belong to this interval). To check that $P'$ is a valid point, observe first that since $\lambda > \frac{1}{2}$ and $\hat{\mu}\geq \vvar$, the denominator of $P'$ is positive. Also, again using that $\lambda > \frac{1}{2}$, Corollary \ref{cor:vvar-rvar-bounds} implies that $\vvar\leq 1-\lambda\leq \frac{1}{2}$, hence $\lambda > \vvar$, which means that $P'\in[0, 1]$. 
Thus, $g(\xbfs,\ybfs)\leq \lambda \cdot P' + \delta = \frac{\hat{\mu}\cdot\lambda}{\lambda + \hat{\mu}-\vvar}+\delta$.
\end{proof}

\begin{lemma}
\label{lem:stationary_hatm2}
Let $(\xbfs,\ybfs)$ be a $\delta$-stationary point with $\delta\geq 0$, and let $\hat{\mu}< \vvar$, $\lambda > \frac{1}{2}$, and $\mu>\frac{2}{3}$. Then, it holds that $g(\xbfs,\ybfs) \leq \frac{\vvar\cdot \mu }{\mu - \hat{\mu} + \vvar} +\delta$.
\end{lemma}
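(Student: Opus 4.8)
The plan is to mirror the argument used for Lemma~\ref{lem:stationary_hatm_1}, but now to balance the \emph{second} and \emph{third} of the three upper bounds provided by Lemma~\ref{lem:stationary_bounds}, rather than the first and third. Concretely, for the value $P$ obtained from the optimal dual solution, Lemma~\ref{lem:stationary_bounds} gives $g(\xbfs,\ybfs) \le \min\{P\cdot\lambda,\ (1-P)\cdot\mu,\ P\cdot\vvar + (1-P)\cdot\hat{\mu}\} + \delta$, and I would simply discard the term $P\cdot\lambda$ (legitimate since $\min\{a,b,c\}\le\min\{b,c\}$), working with $g(\xbfs,\ybfs) \le \min\{(1-P)\cdot\mu,\ P\cdot\vvar + (1-P)\cdot\hat{\mu}\} + \delta$.

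Next I would observe that $(1-P)\cdot\mu$ is a decreasing linear function of $P$, while $P\cdot\vvar + (1-P)\cdot\hat{\mu} = \hat{\mu} + P\cdot(\vvar-\hat{\mu})$ is an \emph{increasing} linear function of $P$, precisely because we are in the regime $\hat{\mu} < \vvar$. Hence the minimum of these two functions, maximized over $P \in [0,1]$, is attained at the point $P'$ where they coincide; solving $(1-P')\mu = P'\vvar + (1-P')\hat{\mu}$ gives $P' = \frac{\mu-\hat{\mu}}{\mu - \hat{\mu} + \vvar}$, and substituting back yields the common value $(1-P')\mu = \frac{\vvar\cdot\mu}{\mu - \hat{\mu} + \vvar}$. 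Since the bound of Lemma~\ref{lem:stationary_bounds} holds for the actual $P$, and $\min\{(1-P)\mu,\ P\vvar+(1-P)\hat{\mu}\}$ is at most its maximum over $[0,1]$, the claimed inequality $g(\xbfs,\ybfs) \le \frac{\vvar\cdot\mu}{\mu - \hat{\mu} + \vvar} + \delta$ follows.

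The only point requiring care — and the only place the hypotheses $\lambda > \frac{1}{2}$ and $\mu > \frac{2}{3}$ enter — is verifying that $P'$ is a valid value, i.e.\ $P' \in [0,1]$, so that the balancing-point argument applies. The denominator $\mu - \hat{\mu} + \vvar$ is positive since $\vvar > \hat{\mu}$ and $\mu > 0$, which also gives $P' \le 1$; for $P' \ge 0$ I would check $\hat{\mu} \le \mu$ via the chain $\hat{\mu} < \vvar \le 1 - \lambda - 2\tvar \le 1-\lambda < \frac{1}{2} < \frac{2}{3} < \mu$, where the second inequality is Corollary~\ref{cor:vvar-rvar-bounds} (using $\tvar\ge 0$) and the fourth uses $\lambda > \frac{1}{2}$. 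I do not anticipate a genuine obstacle: the argument is a one-variable optimization identical in spirit to Lemmas~\ref{lem:TS-lambda-mu-bound} and~\ref{lem:stationary_hatm_1}, and the only thing to be careful about is tracking which two of the three bounds of Lemma~\ref{lem:stationary_bounds} are being balanced and reading off the monotonicity directions from the sign of $\vvar - \hat{\mu}$.
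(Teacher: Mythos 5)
Your proposal is correct and follows essentially the same route as the paper's proof: drop the $P\cdot\lambda$ term from Lemma~\ref{lem:stationary_bounds}, balance the decreasing function $(1-P)\cdot\mu$ against the increasing function $P\cdot\vvar+(1-P)\cdot\hat{\mu}$ at $P'=\frac{\mu-\hat{\mu}}{\mu-\hat{\mu}+\vvar}$, and verify $P'\in[0,1]$ via Corollary~\ref{cor:vvar-rvar-bounds} and the hypotheses on $\lambda,\mu$. No substantive differences from the paper's argument.
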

\begin{proof}
By Lemma \ref{lem:stationary_bounds}, we have $g(\xbfs,\ybfs) \leq \min\{(1-P) \cdot \mu, P\cdot \vvar + (1-P)\cdot \hat{\mu}\}+\delta$. 
In analogy to Lemma \ref{lem:stationary_hatm_1}, we have one linear increasing function of $P$ and one linear decreasing function. Hence, the maximum of the minimum of these functions is attained at the point where they are equal, which is for $P' = \frac{\mu-\hat{\mu}}{\mu - \hat{\mu}+\vvar}$, as long as $P'\in [0, 1]$.
By the assumptions on $\lambda$, $\mu$, and by Corollary \ref{cor:vvar-rvar-bounds}, since $\vvar>\hat{\mu}$, we have $\mu > \frac{2}{3}> \vvar >\hat{\mu}$. Hence $P'\in [0,1]$. The final bound we obtain is $g(\xbfs,\ybfs)\leq (1-P')\cdot \mu  + \delta = \frac{\vvar \cdot \mu}{\mu - \hat{\mu}+\vvar}+\delta$.
\end{proof}

\subsection{Case 4.1 of the Improved Strategy-construction Phase}

We now analyze the approximation we obtain, when we fall into Case 4.1 of the algorithm. We establish that either the $\delta$-stationary point has the desired approximation or otherwise, this is achieved by having the row player use an appropriate convex combination of $\wbf$ and $\hat{\wbf}$ and the column player play the dual strategy $\zbf$.

\begin{lemma}
If $\vvar+\tvar \geq \frac{\mu-\lambda}{2}$, and $\hat{\mu} \geq \mu -\vvar-\tvar$, then for the strategy profile $(p\cdot\wbf+(1-p)\cdot\hat{\wbf},\zbf)$, with $p = \frac{2\cdot(\vvar+\tvar)-(\mu-\lambda)}{2\cdot(\vvar+\tvar)}$, the payoff of both the row and the column player is at least $\frac{\lambda+\mu}{2}$.  
\end{lemma}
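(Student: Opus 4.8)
The plan is to lower bound separately the payoff of the column player and the payoff of the row player against the profile $(p\cdot\wbf+(1-p)\cdot\hat{\wbf},\zbf)$, and in each case show the bound is at least $\frac{\lambda+\mu}{2}$; the claim is about expected payoffs, not regrets, so no best-response considerations enter here. For the column player, I would expand $\cpay{p\cdot\wbf+(1-p)\cdot\hat{\wbf}}{\zbf} = p\cdot\cpay{\wbf}{\zbf} + (1-p)\cdot\cpay{\hat{\wbf}}{\zbf}$, and then substitute the known lower bounds $\cpay{\wbf}{\zbf}\geq \mu$ (from the Fact after \eqref{eq:lambda-mu}) and $\cpay{\hat{\wbf}}{\zbf} = \cpay{\hat{\wbf}}{\ybfs} + \hat{\mu} \geq \hat{\mu} \geq \mu - \vvar - \tvar$ (the first equality by the definition of $\hat{\mu}$, the last inequality being exactly the hypothesis of the lemma). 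This gives $\cpay{\cdot}{\zbf} \geq p\mu + (1-p)(\mu - \vvar - \tvar) = \mu - (1-p)(\vvar+\tvar)$, and plugging in $1-p = \frac{\mu-\lambda}{2(\vvar+\tvar)}$ makes the subtracted term equal to exactly $\frac{\mu-\lambda}{2}$, yielding $\mu - \frac{\mu-\lambda}{2} = \frac{\lambda+\mu}{2}$, as wanted.

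For the row player, I would similarly write $\rpay{p\cdot\wbf+(1-p)\cdot\hat{\wbf}}{\zbf} = p\cdot\rpay{\wbf}{\zbf} + (1-p)\cdot\rpay{\hat{\wbf}}{\zbf}$. Here I would use $\rpay{\wbf}{\zbf}\geq \lambda$ (Fact after \eqref{eq:lambda-mu}) and the key estimate $\rpay{\hat{\wbf}}{\zbf} \geq \lambda + \vvar + 2\tvar$ from Lemma~\ref{lem:rwz-bound}. This gives $\rpay{\cdot}{\zbf} \geq p\lambda + (1-p)(\lambda + \vvar + 2\tvar) = \lambda + (1-p)(\vvar + 2\tvar)$. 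Since $1-p = \frac{\mu-\lambda}{2(\vvar+\tvar)}$, the additive term is $\frac{(\mu-\lambda)(\vvar+2\tvar)}{2(\vvar+\tvar)}$, which is at least $\frac{(\mu-\lambda)(\vvar+\tvar)}{2(\vvar+\tvar)} = \frac{\mu-\lambda}{2}$ because $\vvar+2\tvar \geq \vvar+\tvar$ (using $\tvar \geq 0$, which holds since $\hat{\wbf}$ is a best response to $\hat{\ybf}$ so $\rpay{\hat{\wbf}}{\hat{\ybf}} \geq \rpay{\wbf}{\hat{\ybf}}$). Hence $\rpay{\cdot}{\zbf} \geq \lambda + \frac{\mu-\lambda}{2} = \frac{\lambda+\mu}{2}$.

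Before assembling these, I should check that $p$ is a valid probability, i.e. $p\in[0,1]$: the hypothesis $\vvar+\tvar \geq \frac{\mu-\lambda}{2}$ (together with $\mu > \lambda$, which holds in Case 4) gives $0 \le 1-p = \frac{\mu-\lambda}{2(\vvar+\tvar)} \le 1$, and $\vvar+\tvar>0$ in this subcase since otherwise the hypothesis would force $\mu\le\lambda$. The only mild subtlety — and the one place to be careful — is the sign of $\tvar$ in the row-player computation and the fact that $\vvar$ could in principle be negative; the argument above only needs $\tvar\geq0$, which is immediate, so I do not expect a genuine obstacle. Combining the two displays, both players' payoffs against $(p\cdot\wbf+(1-p)\cdot\hat{\wbf},\zbf)$ are at least $\frac{\lambda+\mu}{2}$, which is the statement. (The reason this is useful downstream: since all payoffs lie in $[0,1]$, a payoff of at least $\frac{\lambda+\mu}{2}$ against the played profile means each player's regret is at most $1-\frac{\lambda+\mu}{2}$; combined with the regret bounds for $(\xbfs,\ybfs)$ from Section~\ref{sec:regret_bounds}, taking the minimum-regret profile yields the $\frac13+\delta$ guarantee — but that is the content of the next lemma, not this one.)
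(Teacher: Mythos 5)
Your proposal is correct and follows essentially the same route as the paper's proof: linearity of the payoffs, the bound $\rpay{\hat{\wbf}}{\zbf}\geq \lambda+\vvar+2\tvar$ from Lemma~\ref{lem:rwz-bound} for the row player, the hypothesis $\hat{\mu}\geq\mu-\vvar-\tvar$ together with $\cpay{\hat{\wbf}}{\zbf}\geq\hat{\mu}$ for the column player, and substitution of $1-p=\frac{\mu-\lambda}{2(\vvar+\tvar)}$. Your extra checks (that $p\in[0,1]$, that $\vvar+\tvar>0$, and that only $\tvar\geq 0$ is needed) are sound and match the paper's remarks.
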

\begin{proof}
Note first that under the assumptions of the lemma, and since $\mu> \lambda$, the parameter $p$ is a valid probability. 
For the row player, we have that her payoff is 
\begin{align*}
\rpay{p\cdot\wbf+(1-p)\cdot\hat{\wbf}}{\zbf} & = p\cdot \rpay{\wbf}{\zbf} + (1-p)\cdot\rpay{\hat{\wbf}}{\zbf}\\
 & \geq p\cdot\lambda +(1-p)\cdot\lambda +(1-p)\cdot(\vvar+\tvar) \qquad (\text{from Lemma~\ref{lem:rwz-bound}})\\
 & = \lambda +(1-p)\cdot(\vvar+\tvar) \\
 & = \lambda +\frac{(\mu-\lambda)}{2} \qquad \left(\text{since $1-p = \frac{\mu-\lambda}{2\cdot(\vvar+\tvar)}$}\right)\\
 & = \frac{\lambda + \mu}{2}.
\end{align*}

For the column player we have that her payoff is
\begin{align*}
\cpay{p\cdot\wbf+(1-p)\cdot\hat{\wbf}}{\zbf} & = p\cdot\cpay{\wbf}{\zbf}+(1-p)\cdot\cpay{\hat{\wbf}}{\zbf}\\
& \geq p\cdot \mu +(1-p)\cdot \hat{\mu} \quad (\text{Since $\hat{\mu} = \cpay{\hat{\wbf}}{\zbf} - \cpay{\hat{\wbf}}{\ybfs} \leq \cpay{\hat{\wbf}}{\zbf}$})\\
& \geq p\cdot\mu +(1-p)\cdot\mu -(1-p)\cdot(\vvar+\tvar) \quad \text{(Since $\hat{\mu}\geq \mu -\vvar -\tvar$)}\\
& = \mu - \frac{(\mu-\lambda)}{2} \quad \left(\text{since $1-p = \frac{\mu-\lambda}{2\cdot(\vvar+\tvar)}$}\right)\\
& = \frac{\mu+\lambda}{2}.
\end{align*}
\end{proof}

\begin{lemma}
Let $p\in [0, 1]$, be such that $\rpay{p \cdot \wbf +(1-p) \cdot \hat{\wbf}}{\zbf}\geq \frac{\lambda + \mu}{2}$, and  $\cpay{p \cdot \wbf +(1-p) \cdot \hat{\wbf}}{\zbf} \geq \frac{\lambda + \mu}{2}$. Then, either $(\xbfs,\ybfs)$ is a $(\frac{1}{3}+\delta)$-Nash equilibrium, or $(p \cdot \wbf +(1-p) \cdot \hat{\wbf},\zbf)$ is a $\frac{1}{3}$-Nash equilibrium.
\end{lemma}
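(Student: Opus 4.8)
The plan is to derive everything from one elementary fact: since all entries of $R$ and $C$ lie in $[0,1]$, every pure-strategy payoff is at most $1$, so the regret of a player at any profile is at most $1$ minus that player's payoff at the profile. First I would apply this to the candidate profile $(\tilde{\wbf},\zbf)$, where $\tilde{\wbf}:=p\cdot\wbf+(1-p)\cdot\hat{\wbf}$ (a valid mixed strategy since $p\in[0,1]$). Combining it with the two hypotheses $\rpay{\tilde{\wbf}}{\zbf}\ge\frac{\lambda+\mu}{2}$ and $\cpay{\tilde{\wbf}}{\zbf}\ge\frac{\lambda+\mu}{2}$ gives $\rreg(\tilde{\wbf},\zbf)\le 1-\frac{\lambda+\mu}{2}$ and $\creg(\tilde{\wbf},\zbf)\le 1-\frac{\lambda+\mu}{2}$, hence $g(\tilde{\wbf},\zbf)\le 1-\frac{\lambda+\mu}{2}$.

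Then I would split into two cases according to how $\lambda+\mu$ compares with $\frac43$, which is exactly the threshold at which the bound above drops to $\frac13$. If $\lambda+\mu\ge\frac43$, then $1-\frac{\lambda+\mu}{2}\le\frac13$, so $(\tilde{\wbf},\zbf)$ is a $\frac13$-Nash equilibrium, and since the algorithm outputs whichever of $(\tilde{\wbf},\zbf)$ and $(\xbfs,\ybfs)$ has the smaller regret, we are done. If instead $\lambda+\mu<\frac43$, I would invoke Lemma~\ref{lem:TS-lambda-mu-bound}, which gives $g(\xbfs,\ybfs)\le\frac{\lambda+\mu}{4}+\delta<\frac13+\delta$, so $(\xbfs,\ybfs)$ is a $(\frac13+\delta)$-Nash equilibrium. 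In both branches one of the two profiles meets the claimed guarantee, which is precisely the disjunction in the statement.

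There is no real obstacle here; the argument is a two-line case analysis, and it does not even use the specific restrictions on $\lambda$ and $\mu$ in Case 4 — only the normalization of the game and Lemma~\ref{lem:TS-lambda-mu-bound}. The one thing to keep straight is the asymmetry of the conclusion: the alternative profile $(\tilde{\wbf},\zbf)$ attains the sharper bound $\frac13$ with no additive slack, whereas the $\delta$-stationary point only attains $\frac13+\delta$, which is exactly why the lemma is phrased as a disjunction of these two guarantees. It is also worth noting that neither branch is vacuous within Case 4, since there $\lambda+\mu$ ranges over an interval that contains $\frac43$ in its interior (it can be as small as slightly above $\frac12+\frac23=\frac76$), so the dichotomy on the value of $\lambda+\mu$ is genuinely needed.
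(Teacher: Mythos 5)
Your proof is correct and follows essentially the same argument as the paper: bound the regret at $(p\cdot\wbf+(1-p)\cdot\hat{\wbf},\zbf)$ by $1-\frac{\lambda+\mu}{2}$ using the payoff hypotheses and normalization, then split on whether $\lambda+\mu$ exceeds $\frac{4}{3}$, invoking Lemma~\ref{lem:TS-lambda-mu-bound} for the $\delta$-stationary point in the other branch. No gaps to report.
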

\begin{proof}
The regret of either player at the strategy profile $(p \cdot \wbf +(1-p) \cdot \hat{\wbf},\zbf)$ is at most $1-\frac{\lambda+\mu}{2}$, since the payoff of any player is no less than $\frac{\lambda+\mu}{2}$ and the best-response payoff is at most 1. On the other hand, by Lemma \ref{lem:TS-lambda-mu-bound} the regret of each player at the $\delta$-stationary point $(\xbfs,\ybfs)$ is at most $\frac{\lambda+\mu}{4}+\delta$. Thus, if $\lambda + \mu \leq \frac{4}{3}$, then $g{(\xbfs,\ybfs)}\leq \frac{1}{3}+\delta$. Otherwise, the maximum regret at the profile $(p \cdot \wbf +(1-p) \cdot \hat{\wbf},\zbf)$ is at most $1-\frac{\lambda+\mu}{2} \leq \frac{1}{3}$.
\end{proof} 

\subsection{Case 4.2 of the Strategy-construction phase}
In this case it holds that either $\vvar+\tvar<\frac{\mu-\lambda}{2}$ or $\hat{\mu}<\mu-\vvar-\tvar$. It turns out that this is a technically more intriguing case, and the reason is that the parameters are less constrained, compared to Case 4.1. As a result, we need to consider different subcases in order to have tighter upper bounds.
We recall that the algorithm in this case outputs either $(\xbfs, \ybfs)$, or a profile where the row player selects her dual strategy $\wbf$, which is a best response against $\ybfs$, and the column player plays a convex combination between $\hat{\ybf}$ and $\zbf$, which by the definition of $\hat{\ybf}$, is a convex combination of her primal strategy $\ybfs$ and her dual strategy $\zbf$. 

\begin{lemma}
\label{regrets of strategy profile}
The regret of the row player at $(\wbf,\hat{\ybf})$ is $\tvar$ and the regret of the column player is at most $1-\frac{\mu}{2}$.
\end{lemma}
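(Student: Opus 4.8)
The plan is to treat the two players separately, using only the definitions of $\hat{\ybf}$, $\hat{\wbf}$, $\tvar$, and $\mu$ from Case~4, together with the normalization of the payoff matrices and the Fact following~\eqref{eq:lambda-mu}.

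First I would handle the row player. By definition, her regret at $(\wbf,\hat{\ybf})$ equals $\max_{i\in[n]}\rpay{e_i}{\hat{\ybf}} - \rpay{\wbf}{\hat{\ybf}}$. Since $\hat{\wbf}$ was chosen as a best response against $\hat{\ybf}$, the maximum is attained at $\hat{\wbf}$, so the regret equals $\rpay{\hat{\wbf}}{\hat{\ybf}} - \rpay{\wbf}{\hat{\ybf}}$, which is precisely $\tvar$ by its definition. This step is immediate once one unwinds the definitions.

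Next I would handle the column player. Her regret at $(\wbf,\hat{\ybf})$ equals $\max_{j\in[n]}\cpay{\wbf}{e_j} - \cpay{\wbf}{\hat{\ybf}}$, and the first term is at most $1$ because all entries of $C$ lie in $[0,1]$. So it suffices to show $\cpay{\wbf}{\hat{\ybf}}\ge \mu/2$. Expanding $\hat{\ybf} = \tfrac12\ybfs + \tfrac12\zbf$ gives $\cpay{\wbf}{\hat{\ybf}} = \tfrac12\cpay{\wbf}{\ybfs} + \tfrac12\cpay{\wbf}{\zbf}$, and since $\cpay{\wbf}{\ybfs}\ge 0$ and $\cpay{\wbf}{\zbf}\ge\mu$ (the latter being exactly the Fact after~\eqref{eq:lambda-mu}), we conclude $\cpay{\wbf}{\hat{\ybf}}\ge\mu/2$, hence the column player's regret is at most $1-\mu/2$.

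There is no genuine obstacle here: both claims reduce to unwinding definitions, and the only external input is the inequality $\cpay{\wbf}{\zbf}\ge\mu$, already recorded as a Fact. The only points to be careful about are that the defining property of $\hat{\wbf}$ (being a best response to $\hat{\ybf}$) is exactly what makes the row player's regret collapse to $\tvar$, and that the weight $\tfrac12$ in $\hat{\ybf}$ is what yields the clean $\mu/2$ bound.
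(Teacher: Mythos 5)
Your proposal is correct and follows essentially the same argument as the paper: the row player's regret collapses to $\tvar$ because $\hat{\wbf}$ is a best response to $\hat{\ybf}$, and the column player's payoff satisfies $\cpay{\wbf}{\hat{\ybf}} = \tfrac12\cpay{\wbf}{\ybfs}+\tfrac12\cpay{\wbf}{\zbf}\geq \tfrac{\mu}{2}$, giving the $1-\tfrac{\mu}{2}$ bound since payoffs are at most $1$. No gaps; nothing further is needed.
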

\begin{proof}
By definition, the regret of the row player is $\tvar$, since $\hat{\wbf}$ is a best-response strategy against $\hat{\ybf}$.
On the other hand, recall by the definition of $\mu$, that $\cpay{\wbf}{\zbf}\geq \mu$. 
So, we have that $\cpay{\wbf}{\hat{\ybf}} = \frac{\cpay{\wbf}{\ybfs}}{2} + \frac{\cpay{\wbf}{\zbf}}{2} \geq \frac{\cpay{\wbf}{\zbf}}{2} \geq \frac{\mu}{2}$. Thus, since the maximum payoff is less than or equal to 1, we have that the regret of the column player is at most $1-\frac{\mu}{2}$.
\end{proof}

We now quantify the regret of the players at the profile $(\wbf,(1-q)\cdot \hat{\ybf}+q\cdot \zbf)$ that is considered by the algorithm. In particular, we obtain an upper bound as a function of the parameters $\lambda, \mu$, and $\tvar$.  
\begin{lemma}
\label{bound at (wbf,(1-p)y'+pz)}
Consider the strategy profile $(\wbf,(1-q)\cdot \hat{\ybf}+q\cdot \zbf)$ with $q = \frac{1-\mu/2-\tvar}{1+\mu/2-\lambda -\tvar}$. Then, the regret of each player is no greater than $q\cdot (1-\lambda) + (1-q)\cdot \tvar = \frac{1-\mu/2-\tvar-\lambda +\mu \cdot \lambda/2+\mu \cdot \tvar}{1+\mu/2-\lambda -\tvar}$.
\end{lemma}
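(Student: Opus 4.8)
The plan is to bound each player's regret at the profile $(\wbf,\ybf'')$, where $\ybf'':=(1-q)\cdot\hat{\ybf}+q\cdot\zbf$, separately, and then to observe that the coefficient $q$ in the statement is precisely the value that makes the two resulting upper bounds coincide, so that their common value serves as the claimed bound for both players. First I would record that $q$ is a legitimate probability: since $\wbf$ is a best response against $\ybfs$ we have $\vvar\geq 0$, so Corollary~\ref{cor:vvar-rvar-bounds} together with $\lambda>\tfrac12$ gives $\tvar\leq\tfrac{1-\lambda-\vvar}{2}\leq\tfrac{1-\lambda}{2}<\tfrac14$; hence the numerator $1-\mu/2-\tvar$ and the denominator $1+\mu/2-\lambda-\tvar$ of $q$ are both strictly positive, and $q\leq 1$ because $\lambda\leq\mu$ throughout Case~4. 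It is also convenient to record the identities $1-q=\tfrac{\mu-\lambda}{1+\mu/2-\lambda-\tvar}$ and $1+q=\tfrac{2-\lambda-2\tvar}{1+\mu/2-\lambda-\tvar}$, which will be used at the end.

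For the \emph{row player}, I would use that the regret $\rreg(\wbf,\cdot)$ is a convex function of the column player's strategy, being the difference of the convex map $\ybf\mapsto\max_i\rpay{e_i}{\ybf}$ and the linear map $\ybf\mapsto\rpay{\wbf}{\ybf}$ (equivalently, one may just split all payoffs linearly along $\ybf''=(1-q)\hat{\ybf}+q\zbf$). This yields $\rreg(\wbf,\ybf'')\leq(1-q)\cdot\rreg(\wbf,\hat{\ybf})+q\cdot\rreg(\wbf,\zbf)$. By Lemma~\ref{regrets of strategy profile} the first regret equals $\tvar$, and since $\rpay{\wbf}{\zbf}\geq\lambda$ (the Fact stated just after \eqref{eq:lambda-mu}) while any best-response payoff is at most $1$, the second is at most $1-\lambda$. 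Hence the row player's regret is at most $(1-q)\cdot\tvar+q\cdot(1-\lambda)$, which is already the first of the two expressions in the statement.

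For the \emph{column player}, I would instead lower-bound her payoff at $(\wbf,\ybf'')$: using $\cpay{\wbf}{\hat{\ybf}}\geq\tfrac{\mu}{2}$ (established in the proof of Lemma~\ref{regrets of strategy profile}) and $\cpay{\wbf}{\zbf}\geq\mu$ (the definition of $\mu$, or the same Fact), linearity gives $\cpay{\wbf}{\ybf''}=(1-q)\cdot\cpay{\wbf}{\hat{\ybf}}+q\cdot\cpay{\wbf}{\zbf}\geq(1-q)\cdot\tfrac{\mu}{2}+q\cdot\mu=\tfrac{\mu(1+q)}{2}$, so her regret is at most $1-\tfrac{\mu(1+q)}{2}$. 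The final step is a short algebraic verification: substituting the value of $q$ via the two identities recorded above, both $(1-q)\tvar+q(1-\lambda)$ and $1-\tfrac{\mu(1+q)}{2}$ simplify to the single fraction $\tfrac{1-\mu/2-\tvar-\lambda+\mu\lambda/2+\mu\tvar}{1+\mu/2-\lambda-\tvar}$, and this common value therefore dominates the regret of both players. I do not expect a genuine obstacle here — the only point worth isolating is that $q$ is, by design, exactly the root of the equation $(1-q)\tvar+q(1-\lambda)=1-\tfrac{\mu(1+q)}{2}$, and once that is noticed the remaining bookkeeping is routine.
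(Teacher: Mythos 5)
Your proposal is correct and follows essentially the same route as the paper: you argue $q\in[0,1]$ via Corollary~\ref{cor:vvar-rvar-bounds}, split the row player's regret along the mixture using $\rreg(\wbf,\hat{\ybf})=\tvar$ and $\rreg(\wbf,\zbf)\leq 1-\lambda$, and handle the column player via $\cpay{\wbf}{\hat{\ybf}}\geq\mu/2$ and $\cpay{\wbf}{\zbf}\geq\mu$, exactly the ingredients of the paper's proof (your bound $1-\tfrac{\mu(1+q)}{2}$ is algebraically identical to the paper's $(1-q)(1-\mu/2)+q(1-\mu)$). The only cosmetic difference is that you phrase $q$ explicitly as the equalizer of the two bounds, which the paper leaves implicit.
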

\begin{proof}
We start by showing that $q \in [0,1]$, i.e. it is well-defined.
By Corollary \ref{cor:vvar-rvar-bounds}, and since $\lambda>\frac{1}{2}, \vvar\geq 0$, we have $\tvar\leq \frac{1}{2}$. 
Also, since $\lambda<\mu \leq 1$, we have $1+\mu/2-\lambda -\tvar = 1 +\mu -\mu/2 -\lambda -\tvar> 1-\mu/2-\tvar\geq 0$, so $q$ is well-defined.

Now, we are ready to bound the regrets of the players under the strategy profile we consider. For the row player, recall that 
$\tvar = \rreg{(\wbf,\hat{\ybf})}$ and  $\rreg{(\wbf,\zbf)} \leq 1-\lambda$, by the definitions of $\tvar$ and $\lambda$. Hence, we have:
\begin{align*}
\rreg{(\wbf,(1-q)\cdot \hat{\ybf}+q\cdot \zbf)} & 
\leq (1-q)\cdot \rreg{(\wbf,\hat{\ybf})} + q \cdot \rreg{(\wbf,\zbf)} \\
& \leq (1-q)\cdot  \tvar + q \cdot (1-\lambda) \\
& = \frac{(\mu-\lambda)\cdot \tvar + (1-\mu/2-\tvar)\cdot (1-\lambda)}{1+\mu/2-\lambda -\tvar} \quad (\text{by definition of $q$})\\
& = \frac{1-\mu/2-\tvar-\lambda +\mu \cdot \lambda/2+\mu \cdot \tvar}{1+\mu/2-\lambda -\tvar}, 
\end{align*}

In order to bound the regret of the column player, recall that $\creg{(\wbf,\hat{\ybf})}\leq 1-\frac{\mu}{2}$ by Lemma \ref{regrets of strategy profile} and $\creg{(\wbf,\zbf)}\leq 1-\mu$ by the definition of $\mu$.
So, we have that
\begin{align*}
\creg{(\wbf,(1-q)\cdot \hat{\ybf}+q\cdot \zbf)}
& = (1-q)\cdot \creg{(\wbf,\hat{\ybf})} + q \cdot \creg{(\wbf,\zbf)} \\
& \leq (1-q) \cdot (1-\mu/2) + q\cdot (1-\mu)\\
& = \frac{(\mu-\lambda)\cdot (1-\mu/2) + (1-\mu/2-\tvar)\cdot (1-\mu)}{1+\mu/2-\lambda -\tvar}\\
& = \frac{1-\mu/2-\tvar+\tvar \cdot \mu -\lambda + \lambda\cdot \mu/2}{1+\mu/2-\lambda -\tvar}
\end{align*}
\end{proof}

We now come to the core of the proof and establish that 
either the $\delta$-stationary point, or the strategy profile $(\wbf,(1-q)\cdot \hat{\ybf}+q\cdot \zbf)$ yields a good approximation. This is established by the following lemma.
\begin{lemma}
\label{lem:4.2}
Under the assumptions of Case 4.2, either $(\xbfs,\ybfs)$ is a $(\frac{1}{3}+\delta)$-Nash equilibrium, or $(\wbf,(1-q)\cdot \hat{\ybf}+q\cdot \zbf)$ with $q = \frac{1-\mu/2-\tvar}{1+\mu/2-\lambda -\tvar}$, is a $\frac{1}{3}$-Nash equilibrium.
\end{lemma}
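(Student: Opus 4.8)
\textbf{Proof proposal for Lemma~\ref{lem:4.2}.}
The plan is to assume that $(\xbfs,\ybfs)$ is \emph{not} a $(\frac13+\delta)$-Nash equilibrium and to deduce that $\big(\wbf,(1-q)\cdot\hat{\ybf}+q\cdot\zbf\big)$ is a $\frac13$-Nash equilibrium; since the algorithm outputs whichever of the two profiles has smaller regret, this proves the lemma. By Lemma~\ref{bound at (wbf,(1-p)y'+pz)} the maximum regret of $\big(\wbf,(1-q)\cdot\hat{\ybf}+q\cdot\zbf\big)$ is at most
\[
R(\tvar):=\frac{1-\mu/2-\tvar-\lambda+\mu\lambda/2+\mu\tvar}{1+\mu/2-\lambda-\tvar},
\]
so the task reduces to proving $R(\tvar)\le\frac13$. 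A one-line computation (writing $D:=1+\mu/2-\lambda-\tvar>0$, which is positive by the proof of Lemma~\ref{bound at (wbf,(1-p)y'+pz)}) gives $R'(\tvar)=\frac{\mu(\mu-\lambda)}{2D^{2}}>0$, so $R$ is strictly increasing in $\tvar$; hence it suffices to produce an upper bound on $\tvar$ and evaluate $R$ there. Throughout, ``$(\xbfs,\ybfs)$ is not a $(\frac13+\delta)$-NE'' means $g(\xbfs,\ybfs)>\frac13+\delta$, which, fed into the bounds of Lemma~\ref{lem:TS-lambda-mu-bound} and Lemmas~\ref{lem:stationary_bounds}, \ref{lem:stationary_hatm_1}, \ref{lem:stationary_hatm2} (all of the form ``(quantity)$\,+\,\delta$''), forces the corresponding quantity to exceed $\frac13$.

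I would then split Case~4.2 into \textbf{(A)} $\vvar+\tvar<\frac{\mu-\lambda}{2}$, and \textbf{(B)} $\hat{\mu}<\mu-\vvar-\tvar$ together with $\vvar+\tvar\ge\frac{\mu-\lambda}{2}$ (whose union is exactly Case~4.2). In case~(A) we get $\tvar<\frac{\mu-\lambda}{2}$, so by monotonicity $R(\tvar)<R\!\left(\tfrac{\mu-\lambda}{2}\right)$; expanding, $R\!\left(\tfrac{\mu-\lambda}{2}\right)=\frac{1-\mu-\lambda/2+\mu^{2}/2}{1-\lambda/2}$, and the inequality $R\!\left(\tfrac{\mu-\lambda}{2}\right)\le\frac13$ simplifies to $\lambda\ge\frac{3\mu^{2}-6\mu+4}{2}$. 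On the other hand, failure of $(\xbfs,\ybfs)$ together with Lemma~\ref{lem:TS-lambda-mu-bound} gives $\frac{\lambda\mu}{\lambda+\mu}>\frac13$, i.e.\ $\lambda\ge\frac{\mu}{3\mu-1}$; and $\frac{\mu}{3\mu-1}\ge\frac{3\mu^{2}-6\mu+4}{2}$ holds because, after clearing denominators, it rearranges to $(\mu-1)(3\mu-2)^{2}\le0$, which is true since $\mu\le1$. This settles case~(A).

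In case~(B) I would split once more, according to whether $\hat{\mu}\ge\vvar$ or $\hat{\mu}<\vvar$, so as to invoke Lemma~\ref{lem:stationary_hatm_1} or Lemma~\ref{lem:stationary_hatm2} respectively. When $\hat{\mu}\ge\vvar$, failure of $(\xbfs,\ybfs)$ gives $\hat{\mu}(3\lambda-1)>\lambda-\vvar$; combining this with the case hypothesis $\hat{\mu}<\mu-\vvar-\tvar$ and then eliminating $\vvar$ via Corollary~\ref{cor:vvar-rvar-bounds} ($\vvar\le1-\lambda-2\tvar$) yields an explicit upper bound on $\tvar$ in terms of $\lambda,\mu$ only, into which $R$ is evaluated and checked to be $\le\frac13$ using $\frac12<\lambda\le\frac23<\mu\le1$. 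When $\hat{\mu}<\vvar$, failure of $(\xbfs,\ybfs)$ via Lemma~\ref{lem:stationary_hatm2} gives $\hat{\mu}>\mu-(3\mu-1)\vvar$, which with $\hat{\mu}<\mu-\vvar-\tvar$ forces $\tvar<(3\mu-2)\vvar$; using $\vvar\le1-\lambda-2\tvar$ again this gives $\tvar<\frac{(1-\lambda)(3\mu-2)}{3(2\mu-1)}$, and evaluating $R$ at that value and simplifying (using $\lambda\le\frac23$, $\mu\le1$) yields the bound $\frac13$. In every branch, the profile returned by the algorithm is therefore a $(\frac13+\delta)$-NE.

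The step I expect to be the main obstacle is case~(B): there the parameters $\lambda,\mu,\vvar,\hat{\mu},\tvar$ are the least constrained, and the usable upper bound on $\tvar$ emerges only after carefully chaining three ingredients---the ``$(\xbfs,\ybfs)$ is bad'' inequality coming from the correct one of Lemmas~\ref{lem:stationary_hatm_1}, \ref{lem:stationary_hatm2}, the sub-case inequality $\hat{\mu}<\mu-\vvar-\tvar$, and the geometric constraint $\vvar+2\tvar\le1-\lambda$ of Corollary~\ref{cor:vvar-rvar-bounds}. One then has to check that the resulting rational inequality in $\lambda$ and $\mu$ holds on the whole admissible region $\{\frac12<\lambda\le\frac23<\mu\le1\}$, which is exactly where the thresholds $\frac12$, $\frac23$ and the normalization $\mu\le1$ get used; the computations in case~(A) and in the two sub-cases of~(B) are routine once the right $\tvar$-bounds are in hand.
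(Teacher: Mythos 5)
Your plan follows the paper's own route for this lemma quite closely: the same three-way decomposition of Case 4.2 (your (A) is the paper's subcase 4.2(i), and the two halves of your (B) are its 4.2(ii) and 4.2(iii)), the same use of Lemma~\ref{bound at (wbf,(1-p)y'+pz)} for the profile $(\wbf,(1-q)\hat{\ybf}+q\zbf)$, and the same use of Lemmas~\ref{lem:TS-lambda-mu-bound}, \ref{lem:stationary_hatm_1}, \ref{lem:stationary_hatm2} together with Corollary~\ref{cor:vvar-rvar-bounds} to constrain the parameters when $(\xbfs,\ybfs)$ fails. Your case (A) is complete and correct, and equivalent to the paper's 4.2(i) (both reduce to $(\mu-1)(3\mu-2)^2\le 0$); your monotonicity-in-\tvar argument and the direct elimination of \vvar in (B) are pleasant variants that avoid the paper's proof-by-contradiction phrasing and its extra split on whether $\mu-\lambda<\tvar$ in 4.2(ii).

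There is, however, a genuine gap in case (B) as written: you assert that after substituting your \tvar-bounds the inequality $R(\tvar)\le\frac13$ ``is checked using $\frac12<\lambda\le\frac23<\mu\le1$'', but the box constraints alone do not suffice in either sub-case. Note that $R(t)\le\frac13$ is equivalent to $2-2\mu-2\lambda+\frac{3\lambda\mu}{2}+(3\mu-2)t\le 0$; at $\lambda=0.51$, $\mu=0.7$ your (B)(i) bound $\tvar<\frac{(3\lambda-1)\mu+3\lambda^2-6\lambda+2}{3(1-\lambda)}\approx 0.17$ makes this expression about $+0.12>0$, and the same happens with the (B)(ii) bound $\tvar<\frac{(3\mu-2)(1-\lambda)}{3(2\mu-1)}$. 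What rescues the argument is the constraint you invoke only in case (A): failure of $(\xbfs,\ybfs)$ also forces $\frac{\lambda\mu}{\lambda+\mu}>\frac13$, i.e.\ $\mu>\frac{\lambda}{3\lambda-1}$, by Lemma~\ref{lem:TS-lambda-mu-bound} (this excludes the point above, where the stationary point is already good). With that extra inequality both of your checks do go through: e.g.\ in (B)(i) the resulting expression is convex in $\mu$ and nonpositive at both endpoints $\mu=\frac{\lambda}{3\lambda-1}$ (where it equals $\frac{(\lambda-1)(3\lambda-2)^2}{2(3\lambda-1)}$) and $\mu=1$ (where it equals $\frac{(3\lambda-1)(3\lambda-2)}{2}$), and an analogous computation works in (B)(ii); the paper's subcases 4.2(ii) and 4.2(iii) use this constraint explicitly. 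So the skeleton is sound, but the decisive verification in case (B) is missing as stated and must also feed in the failed bound of Lemma~\ref{lem:TS-lambda-mu-bound}, not just the ranges of $\lambda$ and $\mu$.
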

\begin{proof}

Since we are in Case 4.2, where either $\vvar+\tvar<\frac{\mu-\lambda}{2}$, or $\hat{\mu}<\mu-\vvar-\tvar$, we will split the analysis into further subcases, so that we have a more concrete relation between the relevant parameters in each subcase. More precisely, we will consider the following three subcases.
\begin{enumerate}
    \item[ \quad ] 4.2(i) $\vvar+\tvar < \frac{\mu-\lambda}{2}$.
    \item[ ] 4.2(ii) $\hat{\mu} < \mu -\vvar-\tvar$, and $\hat{\mu} \geq \vvar$.
    \item[ ] 4.2(iii) $\hat{\mu} < \mu -\vvar-\tvar$, and $\hat{\mu} < \vvar$.
\end{enumerate}
\medskip
So far, we have not been able to have a unifying argument for all these different subcases. Consequently, we proceed with a separate analysis for each of them. 

\paragraph*{Subcase 4.2(i)}


By Lemma \ref{lem:TS-lambda-mu-bound}, the maximum regret bound for any $\delta$-stationary point is $\frac{\lambda\cdot \mu}{\lambda + \mu}+\delta$. In addition, from Lemma~\ref{bound at (wbf,(1-p)y'+pz)} the maximum regret for the strategy profile $(\wbf,(1-q)\cdot \hat{\ybf}+q\cdot \zbf)$ is bounded by $$q\cdot (1-\lambda) + (1-q)\cdot \tvar=\frac{1-\mu/2-\tvar-\lambda +\mu \cdot  \lambda/2+\mu \cdot \tvar}{1+\mu/2-\lambda -\tvar}.$$
For the sake of contradiction, assume that the regret bound at $(\xbfs,\ybfs)$ is strictly greater than $\frac{1}{3}+\delta$ and that the bound at the second profile is strictly greater than $\frac{1}{3}$.
The first assumption yields 
$$\frac{\lambda \cdot \mu}{\lambda +\mu} >\frac{1}{3} \Rightarrow \mu \cdot \lambda -\lambda/3 - \mu /3 >0 \Rightarrow \lambda > \frac{\mu}{3\mu-1}.$$
Note that $3\mu-1>0$. From the second assumption, using Lemma \ref{bound at (wbf,(1-p)y'+pz)}, we have that 
\[\frac{(1-\mu/2-\tvar) \cdot (1-\lambda) + (\mu-\lambda)\cdot \tvar}{1+\mu/2-\lambda-\tvar} > \frac{1}{3} \Rightarrow \]
\[(1-\frac{\mu}{2}-\tvar)\cdot (1-\lambda) + (\mu-\lambda)\cdot\tvar  -\frac{1}{3}-\frac{\mu}{6}+\frac{\lambda}{3}+ \frac{\tvar}{3} > 0 \Rightarrow\]
\begin{equation}
\label{eq:v+t<m-l/2}
    \frac{\lambda \mu}{2} - \frac{2\mu}{3} - \frac{2\tvar}{3} - \frac{2\lambda}{3} + \mu \cdot \tvar + \frac{2}{3} >0.
\end{equation}
To obtain a contradiction, we will establish an upper bound for the LHS of \eqref{eq:v+t<m-l/2}. Since we are in the subcase where $\tvar +\vvar < \frac{\mu-\lambda}{2}$, this implies that $\tvar< \frac{\mu-\lambda}{2}$, as $\vvar\geq 0$. Combined with the fact that $\mu>2/3$, the LHS of \eqref{eq:v+t<m-l/2} is upper bounded as follows:

\begin{multline*}
\frac{\lambda \cdot \mu}{2} - \frac{2\mu}{3}  - \frac{2\lambda}{3} + \left(\mu -\frac{2}{3}\right) \cdot \tvar + \frac{2}{3}
\leq \frac{\lambda \cdot \mu}{2} - \frac{2\mu}{3}  - \frac{2\lambda}{3} + \left(\mu -\frac{2}{3}\right) \cdot \frac{(\mu - \lambda)}{2} + \frac{2}{3}.  
\end{multline*}

After expanding the terms in the product and simplifying, the above upper bound equals $\frac{\mu^2}{2} - \mu - \frac{\lambda}{3} + \frac{2}{3}$. Since $\lambda > \frac{\mu}{3\mu-1}$, we finally have that

\[\frac{\mu^2}{2} - \mu - \frac{\lambda}{3} + \frac{2}{3} < \frac{\mu^2}{2} - \mu - \frac{\mu}{3\cdot(3\mu-1)} + \frac{2}{3} =  
\frac{(3\mu - 2)^2\cdot(\mu - 1)}{6\cdot(3\mu - 1)} \leq 0,\]
The last inequality follows since $(3\mu - 2)^2 > 0$, $\mu \leq 1$, and $(3\mu - 1) > 0$. Together with \eqref{eq:v+t<m-l/2}, we have a contradiction.
\qed

\paragraph*{Subcase 4.2(ii)}

By Lemma \ref{lem:stationary_bounds}, the regret of the $\delta$-stationary point is
\begin{align*}
g(\xbfs,\ybfs) & \leq \min\{P\cdot \lambda, (1-P)\cdot\mu, P \cdot \vvar + (1-P)\cdot \hat{\mu}\} +\delta \\
& \leq \min\Big\{\frac{\lambda \cdot  \mu}{\lambda + \mu},\frac{\hat{\mu}\cdot \lambda}{(\lambda +\hat{\mu}-\vvar)}\Big\} +\delta  \quad (\text{by Lemmas \ref{lem:TS-lambda-mu-bound} and \ref{lem:stationary_hatm_1}})\\
& \leq \min\Big\{\frac{\lambda \cdot \mu}{\lambda + \mu},\frac{(\mu-\vvar-\tvar)\cdot \lambda}{(\lambda +\mu -\tvar -2\vvar)}\Big\} +\delta
\end{align*}
The third inequality holds because $\frac{\hat{\mu}\cdot \lambda}{\lambda +\hat{\mu}-\vvar}$ is an increasing function of $\hat{\mu}$ (this can be verified by taking the derivative and then using the fact that $\lambda> \vvar$), and $\hat{\mu} < \mu-\vvar-\tvar$.  

\medskip
\noindent
Now we consider two cases, in terms of $\mu-\lambda$.
\begin{itemize}
    \item If $\mu-\lambda < \tvar$, we have that  $\hat{\mu} < \mu -\vvar - \mu +\lambda = \lambda - \vvar$.
    So, using the inequality we derived above, we get that the approximation bound of a $\delta$-stationary point in this case is $\frac{\hat{\mu} \cdot \lambda}{\lambda+\hat{\mu}-\vvar}+\delta\leq \frac{\lambda \cdot (\lambda-\vvar)}{2\lambda-2\vvar}+\delta = \frac{\lambda}{2}+\delta\leq \frac{1}{3}+\delta$.
    \item If $\mu-\lambda \geq \tvar$, then $\frac{(\mu-\vvar-\tvar)\cdot \lambda}{\lambda+\mu-\tvar-2\vvar}$ is increasing with respect to $\vvar$ (it can be easily verified by looking at the derivative). Therefore, using Corollary \ref{cor:vvar-rvar-bounds}, that $\vvar \leq 1-\lambda-\tvar$, we get that
    \begin{align*}
g(\xbfs,\ybfs) 
& \leq \min\Big\{\frac{\lambda \cdot \mu}{\lambda + \mu},\frac{(\mu-\vvar-\tvar)\cdot \lambda}{(\lambda +\mu -\tvar -2\vvar)}\Big\} +\delta\\
& \leq \min\Big\{\frac{\lambda \cdot \mu}{\lambda + \mu},\frac{\lambda \cdot (\lambda + \mu - 1))}{(3\lambda + \mu + \tvar - 2)}\Big\}+\delta. \end{align*}
\end{itemize}
Hence, given the analysis above, in what follows we will assume that $\mu - \lambda \geq t_r$, since otherwise we have a $(\frac{1}{3}+\delta)$-Nash equilibrium.

\medskip
\noindent
Assume now that the approximation bound of $(\xbfs,\ybfs)$ is worse than $\frac{1}{3}+\delta$.  
Thus, we have $\frac{\lambda \cdot \mu}{\lambda + \mu} >\frac{1}{3}$ and $\frac{\lambda \cdot (\lambda + \mu - 1))}{(3 \lambda + \mu + \tvar - 2)}>\frac{1}{3}$. 
We will prove by contradiction that in this case the approximation bound of $(\wbf,(1-q)\cdot \hat{\ybf}+q\cdot \zbf)$ is at most $\frac{1}{3}$.
For the sake of contradiction, assume that the approximation bound of $(\wbf,(1-q)\cdot \hat{\ybf}+q\cdot \zbf)$ is strictly worse than $\frac{1}{3}$, i.e., using Lemma \ref{bound at (wbf,(1-p)y'+pz)}, we assume that 
$\frac{(1-\mu/2-\tvar) \cdot (1-\lambda) + (\mu-\lambda)\cdot \tvar}{1+\mu/2-\lambda-\tvar}>1/3$.

Next, we will use the three inequalities from above in order to derive our contradiction. From the first inequality, i.e. $\frac{\lambda\cdot \mu}{\lambda+\mu} > \frac{1}{3}$, we get that
\begin{align}
\label{eq:aux1}
    \mu > \frac{\lambda}{3\lambda-1}.
\end{align}
From the second inequality, i.e. $\frac{\lambda \cdot (\lambda + \mu - 1))}{(3 \lambda + \mu + \tvar - 2)}>\frac{1}{3}$, we get $\lambda\cdot(\lambda + \mu - 1) - \frac{\mu}{3} - \frac{\tvar}{3} - \lambda + \frac{2}{3} >0$, which in turn implies that  
\begin{equation} 
\label{eq:r_bound}
0\leq \tvar< 3\lambda\cdot (\lambda + \mu - 1) - \mu - 3\cdot\lambda + 2.
\end{equation}

From the third inequality, i.e., from the regret bound of the profile $(\wbf,(1-p)\cdot \hat{\ybf}+p\cdot \zbf)$, we obtain 
\[\frac{(1-\mu/2-\tvar) \cdot (1-\lambda) + (\mu-\lambda) \cdot \tvar}{1+\mu/2-\lambda-\tvar} > \frac{1}{3} \Rightarrow \]
\[\left(1-\frac{\mu}{2}-\tvar\right)\cdot (1-\lambda) + (\mu-\lambda)\cdot \tvar  -\frac{1}{3}-\frac{\mu}{6}+\frac{\lambda}{3}+\frac{\tvar}{3}>0 \Rightarrow\]
\begin{align}
\label{eq:aux3}
\frac{\lambda\cdot \mu}{2} - \frac{2\mu}{3}-\frac{2\lambda}{3}+\left(\mu - \frac{2}{3} \right) \cdot \tvar+\frac{2}{3}>0 
\end{align}
We will prove that it is not possible that all Inequalities~\eqref{eq:aux1}--\eqref{eq:aux3} simultaneously hold.

\medskip
\noindent
Let us focus on Inequality~\eqref{eq:aux3}. We will prove that if \eqref{eq:aux1} and \eqref{eq:r_bound} hold, then the left hand side (LHS) of \eqref{eq:aux3} {\em cannot} be positive. To this end, we will upper bound the LHS of \eqref{eq:aux3}.
Observe that by our assumption that $\mu> \frac{2}{3}$, the term $(\mu-2/3)\cdot \tvar$ in \eqref{eq:aux3} is increasing with $\tvar$. So, if we use the upper bound for $\tvar$ from~\eqref{eq:r_bound} (and after simplifying the resulting expressions), we get that the LHS of \eqref{eq:aux3} is upper bounded by
\begin{align}
    \label{eq:aux4}
    (3\lambda -1)\cdot\mu^2+(3\lambda^2-\frac{15\lambda}{2}+2)\cdot\mu  -2\lambda^2+\frac{10\lambda}{3} -\frac{2}{3}
\end{align}

Let us view \eqref{eq:aux4} as a function of $\mu$. Since $\lambda > 1/2$ we get that $3\lambda-1>0$, which implies that for any value of $\lambda\in(1/2, 2/3]$, we have a quadratic function of $\mu$ whose second derivative is positive. This implies that the maximum of the function within any interval will be achieved at one of its endpoints. 
By \eqref{eq:aux1}, we know that for any $\lambda$, the value of $\mu$ ranges in $(\frac{\lambda}{3\lambda-1}, 1]$. Hence, for any $\lambda$, the function defined in \eqref{eq:aux4} is upper bounded either by its value at $\mu=\frac{\lambda}{3\lambda-1}$ or at $\mu=1$.    

We can continue now as follows.
\begin{itemize}
    \item If we set $\mu = \frac{\lambda}{3\cdot\lambda -1}$ in  \eqref{eq:aux4}, we get an upper bound of 
    \[\frac{-((2\lambda - 1) \cdot (3\lambda - 2)^2)}{6\cdot(3\lambda - 1)}.\]
    Observe that this quantity is less than or equal to zero since $\lambda > \frac{1}{2}$ and thus: $2\lambda - 1> 0$; $(3\lambda - 2)^2\geq 0$; and $3\lambda - 1> 0$. 
    \item When $\mu=1$ in \eqref{eq:aux4}, the resulting expression is
    $$\lambda^2 - \frac{7\lambda}{6} + \frac{1}{3}.$$
    But again, it can be verified that this quantity is non-positive for any $\lambda \in (\frac{1}{2},\frac{2}{3}]$. In fact, and quite surprisingly, this is the only interval where this specific polynomial takes negative values.
\end{itemize}
Thus, in both of the above cases we get a non-positive expression, implying that the LHS of \eqref{eq:aux3} is non-positive, which is a contradiction.
\qed

\paragraph*{Subcase 4.2(iii)}

Let us begin by observing that if $\vvar \leq \frac{1}{3}$, then the strategy profile $(\xbfs,\ybfs)$ is a $(\frac{1}{3}+\delta)$-NE. 
Indeed, from Lemma~\ref{lem:stationary_hatm2}, and since $\hat{\mu}<\vvar$, we get that the approximation guarantee of the stationary strategy profile is $\frac{\vvar \cdot \mu}{\mu-\hat{\mu}+\vvar}+\delta \leq \vvar+\delta\leq \frac{1}{3}+\delta$.  Hence, in what follows, we will assume that $\vvar>\frac{1}{3}$.
So, from Corollary \ref{cor:vvar-rvar-bounds} we have that 
\begin{align}
\label{eq:aux2-1}
    \tvar \leq \frac{1-\lambda - \vvar}{2} < \frac{\frac{2}{3} - \lambda}{2}= \frac{1}{3} - \frac{\lambda}{2}.
\end{align}

\noindent Assume now for the sake of contradiction that $(\wbf,(1-q)\cdot \hat{\ybf}+q\cdot \zbf)$ is not a $\frac{1}{3}$-NE, i.e., the maximum regret is higher than $1/3$. Combining this with the bound of Lemma~\ref{bound at (wbf,(1-p)y'+pz)} on the regret, and by simplifying the resulting expression, 
we get the following inequality:
\begin{align}
    \label{eq:aux2-3}
    \frac{\lambda \mu}{2} - \frac{2\mu}{3}-\frac{2\lambda}{3}+\big(\mu-\frac{2}{3}\big)\cdot\tvar+\frac{2}{3}>0.
\end{align}
We will focus on the LHS of~\eqref{eq:aux2-3} and we will upper bound it by a non-positive value. 
Now, assume that $(\xbfs, \ybfs)$ is not a $(\frac{1}{3}+\delta)$-NE; if it was then our algorithm would return this strategy profile. This means from Lemma~\ref{lem:TS-lambda-mu-bound} that $\frac{\lambda\cdot \mu}{\lambda + \mu} > \frac{1}{3}$, which implies that $\mu > \frac{\lambda}{3\lambda-1}$.
In addition, observe that since $\mu > 2/3$, the LHS of~\eqref{eq:aux2-3} is increasing with $\tvar$. So, if we use Inequality~\eqref{eq:aux2-1}, we get that the LHS of~\eqref{eq:aux2-3} is upper bounded by
\begin{align*}
    \frac{\lambda \mu}{2} - \frac{2\mu}{3}-\frac{2\lambda}{3}+(\mu-\frac{2}{3}) \cdot (\frac{1}{3} - \frac{\lambda}{2})  +\frac{2}{3} 
    & = -\frac{\mu}{3} -\frac{\lambda}{3} + \frac{4}{9}\\
    & < - \frac{\lambda}{9\lambda-3} -\frac{\lambda}{3} + \frac{4}{9} \quad \left(\text{since $\mu > \frac{\lambda}{3\lambda-1}$} \right).
\end{align*}
Observe though that this quantity is non-positive for every $\lambda \in [\frac{1}{2}, \frac{2}{3}]$, which in turn contradicts Inequality~\eqref{eq:aux2-3}.
\end{proof}

\section{Discussion}
Our algorithm is the first improvement for a foundational problem after 15 years, during which progress had stalled. 
We hope that our result will again ignite the spark for actively studying \eps-NE in bimatrix games. There is still a large gap between the quasi polynomial-time lower bound for ``some'' very small constant $\eps^{\star}$ from~\cite{Rub16-PETH}, and our newly-established upper bound of $1/3$. 
We conjecture that closing this gap requires radically new ideas.

Our result has some extra positive consequences for games with more than two players.
In~\cite{BBM10} it was shown that if we have an algorithm that finds an $\alpha$-Nash equilibrium in a $(k-1)$-player game, then in polynomial time we can compute a $(\frac{1}{2-\alpha})$-NE for any $k$-player game.
Thus, our algorithm improves the state of the art for $k$-player normal-form games, for any $k>2$. 
Namely, we get $(0.6+\delta)$-NE for three-player games, $(5/7+\delta)$-NE for four-player games, and so on.

\paragraph{\bf Acknowledgements.} We would like to thank Hanyu Li for spotting an issue with one of our proofs in the first version of our paper.

\bibliographystyle{plain}
\bibliography{agt}


\end{document}